\newif\if@restonecol
\newtheorem{theorem}{\noindent\textbf{Theorem}}
\newtheorem{proposition}[theorem]{\noindent\textbf{Proposition}}
\newenvironment{sequation}{\begin{equation}\small\setlength{\abovedisplayskip}{3pt}\setlength{\belowdisplayskip}{3pt}}{\end{equation}}
\newenvironment{sequation*}{\begin{equation*}\small\setlength{\abovedisplayskip}{3pt}\setlength{\belowdisplayskip}{3pt}}{\end{equation*}}
\renewcommand\footnotetextcopyrightpermission[1]{} 
\begin{document}

\title{PingAn: An Insurance Scheme for Job Acceleration in Geo-distributed Big Data Analytics System}

\author{Tiantian~Wang, Zhuzhong~Qian, Sanglu~Lu}
\affiliation{State~Key~Laboratory~for~Novel~Software~Technology,~Nanjing~University,~Nanjing,~China}
\email{Email:  dz1633012@smail.nju.edu.cn, qzz@nju.edu.cn, sanglu@nju.edu.cn}



%


\begin{abstract}
Geo-distributed data analysis in a cloud-edge system is emerging as a daily demand.
Out of saving time in wide area data transfer, some tasks are dispersed to the edges.
However, due to limited computing, overload interference and cluster-level unreachable troubles, efficient execution in the edges is hard, which obstructs the guarantee on the efficiency and reliability of jobs.
Launching copies across clusters can be an insurance on a task's completion.
Considering cluster heterogeneity and accompanying remote data fetch, cluster selection of copies affects execution quality, as different insuring plans drive different revenues.
For providing On-Line-Real-Time analysis results, a system needs to insure the geo-distributed resource for the arriving jobs.
Our challenge is to achieve the optimal revenue by dynamically weighing the gains due to insurance against the loss of occupying extra resource for insuring.

To this end, we design PingAn, an online insurance algorithm promising $(1+\varepsilon)\!\!-\!speed \  o(\frac{1}{\varepsilon^2+\varepsilon})\!\!-\!competitive$ in sum of the job flowtimes via cross-cluster copying for tasks.
PingAn shares resource among the anterior fraction of jobs with the least unprocessed datasize and the fraction is adjustable to fit the system load condition.
After sharing, PingAn concretely insures for tasks following efficiency-first reliability-aware principle to optimize the revenue of copies on jobs' performance.
Trace-driven simulations demonstrate that PingAn can reduce the average job flowtimes by at least $14\%$ than the state-of-the-art speculation mechanisms.
We also build PingAn in Spark on Yarn System to verify its practicality and generality. Experiments show that PingAn can reduce the average job flowtimes by up to $40\%$ comparing to the default Spark execution.
\end{abstract}

%
%
%

\keywords{Job Acceleration, Geo-distributed Data Analysis, Cloud-edge Service}

\maketitle

\section{Introduction}
Today, several cloud applications are moving some of their functionality to edge devices to improve user-perceived fluency of interactions.
The edges are considered as the extension of traditional data centers and they together constitute a large scale cloud-edge system ~\cite{Tan2017Online,Mao2017A,Tong2016A,Chen2017Joint,Calder2013Mapping}.
A large amount of user data, e.g. logs, transaction records and traces, is generated and stored in edges.
Analyses on these geo-distributed user data precipitate many realtime commerce-crucial decisions, for instance, user behavior predictions, load balancing and attack detections ~\cite{Gupta2014Mesa,Morita2014Aggregation,hsieh2017gaia}.

Traditional centralized data analysis needs to transfer all required data to one site, which is time consuming. Thus, modern data analytic platforms tend to disperse tasks to edges and process data locally. Previous works ~\cite{kloudas2015pixida,Vulimiri2015Global,Hung2015Scheduling,Pu2015Low,viswanathan2016clarinet,wang2017lube} carefully design the tasks scheduling policy to minimize the costly WAN data transmission, so as to speed up job completion.

However, achieving efficient geo-distributed data analyses still encounters obstacles besides the WAN bandwidth limitation.
Owing to the limited resources, such as computing slots and import-export bandwidth, edge clusters may be easily overloaded under dynamic user access patterns, or even suffer a cluster-level unreachable trouble.
Thus, tasks in edges sometimes perform badly even fail.
In addition, edge clusters are heterogeneous.
One task running on different edges may have totally different execution quality, while it is very similar among data centers.

One potential method to avoid the unpredictability of edges is utilizing idle resources to clone some critical tasks in multiple edges or data centers to guarantee the job completion time.
Actually, several intra-cluster data analytic platforms have already adopted similar idea to handle straggling tasks ~\cite{Zaharia2008Improving,Dean2010MapReduce,Ananthanarayanan2010Reining,Ren2015Hopper,Chaiken2008SCOPE,Ananthanarayanan2013Effective,Ananthanarayanan2014GRASS,Xu2015Task}, but these cluster-scale straggler-handling mechanisms are unsuited for the inter-cluster insurance in a cloud-edge system.

Firstly, the normal task execution in a cluster assumes slot-independent, i.e., for a task, the execution time on each normal slot is similar, while it is quite different among different edges.
Secondly, the difference of data fetch time caused by the task location is almost imperceptible inside cluster because that the data always has copies inside cluster and intra-cluster bandwidth is abundant, e.g. HDFS has three copies by default.
In contrast, cloning a task in another edge or data center incurs inter-cluster data transmission over scarce WAN bandwidth, thus, the difference of data fetch among inter-cluster copies is non-negligible.
In order to effectively speed up the jobs, we need to consider the impact of cluster heterogeneity and remote data fetch when insuring.

The revenue of insuring is embodied in the improvement of the task's expected execution speed (efficiency) and the probability of completion (reliability).
To measure efficiency improvement, we capture the heterogeneous performance of geo-distribution resource from the recent execution logs and quantify the effect of a task's insurance plan as the change of its execution speed.
For reliability, we quantify the effect of insuring as the increase of task's completion probability and utilize the inter-cluster copies to maximally avoid failure caused by the cluster-level unreachable troubles.

Based on these quantifications, in this paper, we design PingAn, an online fine-grained insurance algorithm aiming to minimize the sum of job flowtimes.
First, PingAn permits the first $\varepsilon$ fraction of jobs with the least unprocessed data size to share the computing resource.
Considering that for a task, the marginal revenue of an extra copy decreases as the task's copy number increases, tuning $\varepsilon$ to accommodate the system's load condition is expected to motivate the best effect of copies under limited resource.
Some hints about $\varepsilon$ selection are also given in the paper according to the experiment results.

When concretely insuring for tasks, we care for both efficiency and reliability on copy's cluster selection.
However, towards our aim, trading off the gains of efficiency against the loss of ignoring reliability is hard and vice versa.
Thus, irresolution arises in the course of insuring, such as at the moment of selecting cluster for a copy, arranging copies for a task, deciding the insuring order of tasks in a job and disposing the collision of preferential clusters among concurrent jobs.

PingAn insures for tasks adhering to the efficient-first reliability-aware principle which relys on the factor that the cluster-level trouble occasionally occurs but seriously harms a wave of jobs' performance.
Further, PingAn improves the efficiency via confining the worst execution rate for each task and averting a worse usage for each slot.

We prove that our online insurance algorithm, PingAn, is a competitive online algorithm in theory and verify the improvement effect of PingAn via trace-driven simulations.
Further, we develop PingAn in Spark on Yarn system to handle real-world workloads due to its practicality and generality.
To be practical, PingAn works without any priori knowledge of jobs beyond the current job progress.
To be general, PingAn serves for general geo-distributed data analysis jobs with any precedence constraints among tasks.

To summarize, we make three main contributions:
\begin{itemize}
\item
We model the dynamic performance of geo-distributed resource to quantify its impact on task completion and formulate our online insurance problem as an optimization problem devoting to make an insurance plan to minimize the sum of job flowtimes.
\item
We design PingAn, an online insurance algorithm and prove it is $o(1+\varepsilon)\!\!-\!\!speed$ $o(\frac{1}{\varepsilon^2+\varepsilon})\!-\!competitive$ in the sum of job flowtimes where $0<\varepsilon<1$.
In simulations, we demonstrate that PingAn can drastically improve the job performance in a cloud-edge environment under any system load condition and reduce the average job flowtimes by at least 14$\%$ than the best speculation mechanisms under heavy load and the improvement is up to 62$\%$ under lighter load.
\item
We develop a prototype of PingAn in Spark on Yarn system and run jobs in comparison with the default Spark executions.
Experiments show that PingAn can guarantee the efficient and reliable job executions in real-world implementation.
\end{itemize}
\section{Related Work}
\textbf{Geo-distributed data analyses:}
Works ~\cite{kloudas2015pixida,Vulimiri2015Global,Hung2015Scheduling,Pu2015Low,viswanathan2016clarinet,Hu2016Flutter} as pioneers devote to the performance problem of geo-distributed data analysis.
Iridium ~\cite{Hung2015Scheduling} coordinates data and task placement to improve query response.
Clarinet ~\cite{Pu2015Low} makes query execution plans with a wide-area network awareness.
Flutter ~\cite{Hu2016Flutter} minimizes the completion time of stage via optimizing task assignment across data centers .
These proposed solutions reduce WAN transfer to improve job performance and assumes unlimited computing resources in data centers at all times.
However, in a cloud-edge scenario, the resource-limited and unreliable edge clusters harms the job performance.
We consider the edges' limitation and ensure the task execution via inter-cluster task copying.

\textbf{Passive detection speculation mechanism:}
This part of works mitigate the abnormal task impact on job completion via monitoring tasks' execution and restart a new copy for identified straggler.
Initially, Google MapReduce system speculatively schedules copies for the remaining tasks at the end of job ~\cite{Dean2010MapReduce}.
It restrains the long-tail tasks but wastes resource on lots of normal tasks. Thus LATE ~\cite{Zaharia2008Improving} and its extended works identifies slow tasks accurately via delicately comparing tasks' progress rate.
Mantri ~\cite{Ananthanarayanan2010Reining} schedules a copy for a task only when the task's total resource consumption decreases.
Hopper ~\cite{Ren2015Hopper} designs the best speculation-aware job scheduler under its task duration model.

However, the above cluster-scale speculation mechanisms lose efficacy in cloud-edge environment.
First, monitoring lots of remote tasks is costly.
Further, the cluster-level unreachable troubles, for instance, power supply interruption, master server shutting down, the failure of high layer exchanger which leads to a network disconnection and many more complicated cases caused by a series of operation accidents, obstruct the system to timely detect straggling tasks.
Then, the normal task standard is indecisive since the cluster heterogeneity, which delays speculation.
Besides, the time-consuming WAN transfer in a restart copy further destroy the acceleration effect of speculation.
We insuring for tasks at the start of execution to avoid these problems.

\textbf{Proactive clone mechanism:}
This part of works devote to reduce the straggler occurrence of a job via task cloning at start.
Dolly ~\cite{Ananthanarayanan2013Effective} refines the straggler-occurring likelihood of some jobs beyond a certain threshold.
~\cite{Xu2015Task} adopts task cloning to speed up job completion and proposes an competitive online scheduling algorithm to optimize the sum of the job flowtimes.
Work in ~\cite{Xu2015Optimization} proposes Smart Cloning Algorithm to maximize the sum of job utility via task cloning.
Given the cluster heterogeneity and WAN transfer demand, the copy execution in different clusters differs.
Thus, in a cloud-edge system, the above cluster-scale cloning mechanisms which only decide the copy number for each task fail to achieve the effect of copies on job performance improvement.
To this end, we make the fine-grained insurance plan to optimize copy effect.
\section{System Overview and Insurance Problem}
\subsection{Geo-distributed Data Analysis System with PingAn}
PingAn utilizes users' geo-distributed resource to guarantee the efficiency of their routine data analyses.
We develop PingAn in Spark ~\cite{zaharia2012resilient} on Yarn ~\cite{vavilapalli2013apache} system across multiple clusters as shown in Figure \ref{fig:PingAn_architecture}.
The resource of each cluster (dash line in Figure \ref{fig:PingAn_architecture}) is managed via one {\tt ResourceManager(RM)} in Yarn.
{\tt RM} receives jobs from Spark client and resolves the job's description to generate a corresponding {\tt AppMaster(AM)} for each job.
Inside {\tt AM}, {\tt DAGScheduler} creates {\tt TaskSet} for the ready tasks.
PingAn works as shown in Figure \ref{fig:p1}.
\begin{enumerate}[a)]
\item
{\tt DAGScheduler} fetches the data location information of tasks from {\tt OutputRecorder} in {\tt AM} and inserts it into {\tt TaskSet}.
The {\tt OutputRecorder} records the intermediate data location once a completed task reports its output message to it.
\item
{\tt TaskSet} is then send to PingAn and waits in {\tt TaskSetPool}.
Multiple {\tt TaskSet}s in {\tt TaskSetPool} are queued in an ascending order of unprocessed data size.
\item
{\tt PerformanceModeler(PM)} in PingAn regularly collects the execution information in each cluster from {\tt RM}s and models the dynamic of resource capacity.
\item
{\tt Insurancer} in PingAn periodically fetches the {\tt TaskSet}s from {\tt TaskSetPool} and draws up an insurance plan for each task with an aware of its completion time in each cluster which is estimated depending on the resource performance model in {\tt PM}.
\item
{\tt TaskSet} along with its insurance plan is sent back to {\tt TaskScheduler} for execution.
{\tt AM} sends the resource(container) requests to {\tt RM}s in the clusters specified in the insurance plan and launched the tasks on the obtained containers.
\end{enumerate}
\begin{figure}[htb]
  \centering
  \subfigure[The cooperation between Spark on Yarn and PingAn]{
    \label{fig:p1}
    \includegraphics[width=0.8\linewidth]{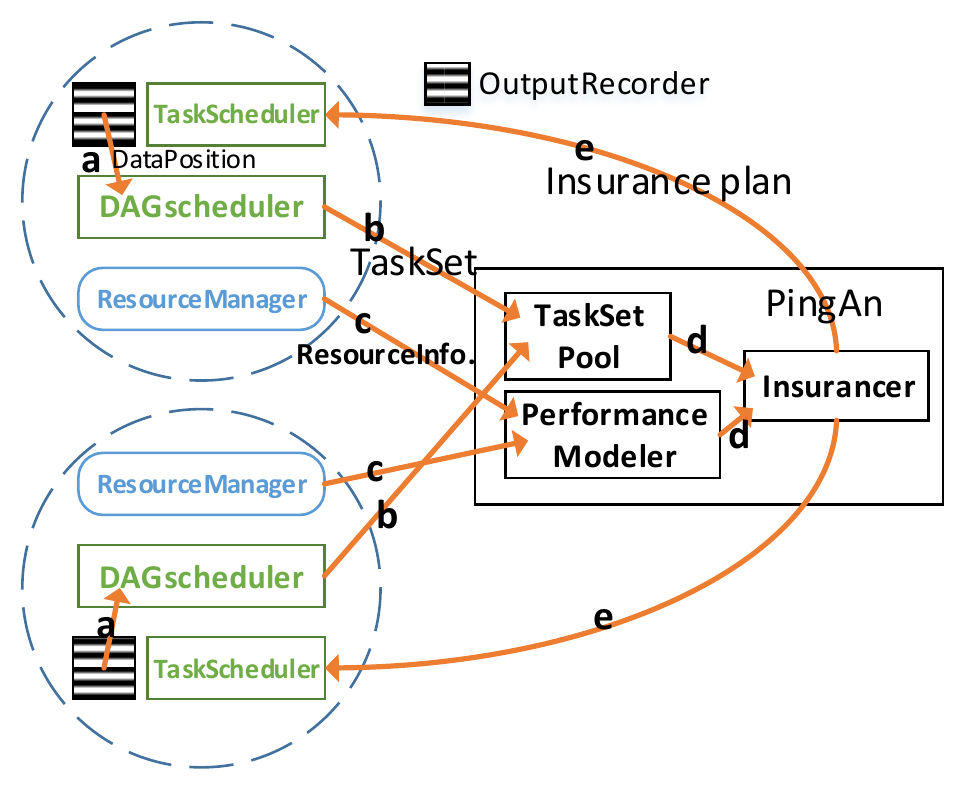}}
 \subfigure[The collection of execution information]{
    \label{fig:p2}
    \includegraphics[width=0.8\linewidth]{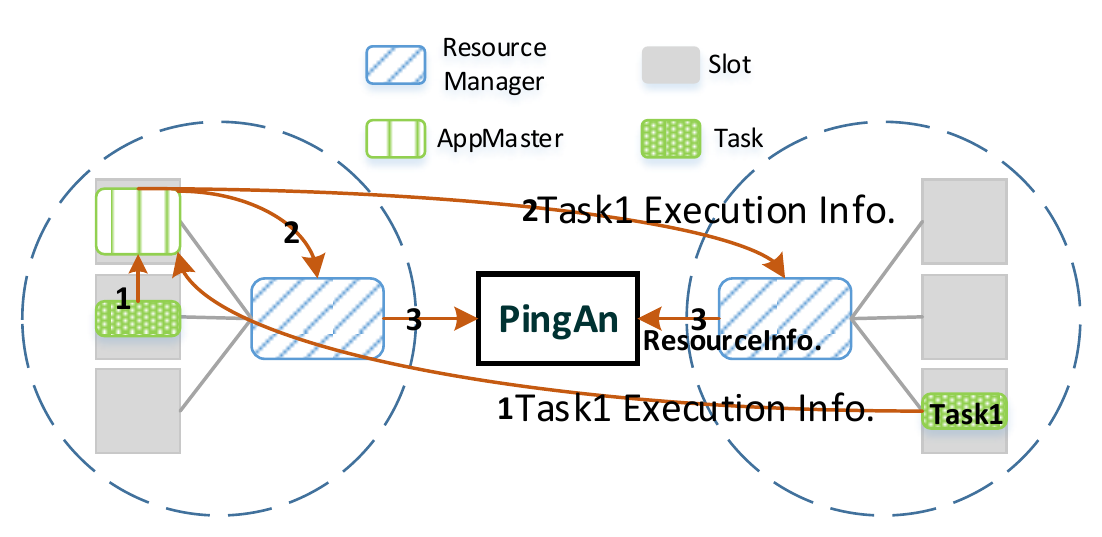}}
  \caption{The workflow of PingAn in Spark on Yarn.}
  \label{fig:PingAn_architecture}
\end{figure}
The collection of execution information is depicted in Figure \ref{fig:p2}.
As we see, {\tt AppMaster} has two tasks running in two clusters (dash line).
\begin{enumerate}[1)]
\item
After a task finishing its work, it reports its output location and execution information to {\tt AM}.
The execution information contains the data processing speed and the inter-cluster transfer speed (with a specification of the two end clusters) obtained by the task.
\item
{\tt AM} sends the execution information to {\tt RM} of the cluster running the task.
\item
{\tt RM} periodically sends the collected information to {\tt PM} in PingAn.
{\tt PM} builds resource performance models and serves for {\tt Insurancer}.
The modeling is described in Section \ref{ch:Quantification}.
\end{enumerate}
\subsection{Quantification of Cluster Selection's Impact on Execution}\label{ch:Quantification}
In this subsection, we first use the execution information to model the resource performance and quantify the impact of multi-copy execution on the task's efficiency.
Then, we quantify the multi-copy execution's impact on task's reliability.


For quantifying the impact on efficiency, first, we tally the data processing speed of recent tasks in a cluster $m$ and obtain a distribution $f_{m}^{P}(v)$ of the speed value $V^{P}_m$ to reflect the unstable computing capacity of cluster $m$.
Specifically, a newly launched task has $f_{m}^{P}(V_1)$ probability to run with $V_1$ processing speed in cluster $m$.

Notice that we use data processing speed instead of cpu processing speed because that the latter is tiring to monitor and it is impossible to be used to estimate task's time due to unit difference.
In order to eliminate the data processing speed bias caused by task type, we meticulously model such a distribution for each RDD operation which composes the Spark job and ensure a task's data processing speed distribution according to its operation.

Second, in the same way, we use $f_{m_{1},m_{2}}^{T}(v)$ to denote the distribution of data transfer bandwidth $V_{m_{1},m_{2}}^{T}$ from cluster $m_{1}$ to $m_{2}$ to reflect the unstable cluster-pair's transfer capacity.
The bandwidth of a transfer is captured at the download end.
A task $\xi_{l}^{i}$ may need multiple transfer and denoted by $\mathcal{I}_{l}^{i}$ its input location set.
When the task launches on cluster $m$, its average transfer bandwidth is given as follows
\begin{sequation*}
V^{T}_m=\frac{1}{\left | \mathcal{I}_{l}^{i} \right |}\sum_{{m}' \in \mathcal{I}_{l}^{i}}V_{m,{m}'}^{T}
\end{sequation*}
Define $f_{m}^{T}(v)$ as the distribution of $V^{T}_m$.

The execution rate of a task $\xi_{l}^{i}$ is denoted as $r_{l}^{i}(x)$ where $x$ is the number of the task's copies.
The execution rate of a task's copy hinges on the bottleneck of data transfer and data processing.
Thus, we have $V_{m}^{l,i}=\min(V_{m}^{P},V_{m}^{T})$ to indicate the execution rate of a copy in cluster $m$.
Let $f_{m}^{l,i}(v)$ be the corresponding distribution of $V_{m}^{l,i}$.
Both of the distribution $f_{m}^{T}(v)$ and $f_{m}^{l,i}(v)$ can be easily derived via the composition computation of multiple discrete random variables.
If a task has just one copy in cluster $m_1$, we have $r_{l}^{i}(1)=\mathbb{E}[V_{m_1}^{l,i}]$.
If the task has another extra copy in cluster $m_2$, then $r_{l}^{i}(2)=\mathbb{E}[\max(V_{m_1}^{l,i},V_{m_2}^{l,i})]$ and so on.

For quantifying the impact on reliability, first, let $p_{m}$ be the probability of encountering cluster-level unreachable trouble in cluster $m$, which is obtained via the statistic.
We assume time to be slotted and the cluster-level failure are independent over time.
Hence, we can assume the cluster-level failure to follow a binomial distribution.
To recap, a copy in cluster $m$ has $1-p_{m}$ probability to exempt from cluster-level trouble at each time slot.
${pro}_{l}^{i}$ denotes the probability of the task $\xi_{l}^{i}$ encountering no cluster-level trouble during its execution and is given by
\begin{sequation*}
{pro}_{l}^{i}=\binom{e_{m}^{l,i}}{0} {p_{m}^{0}(1-p_{m})^{(e_{m}^{l,i}-0)}}=(1-p_{m})^{e_{m}^{l,i}}
\end{sequation*}
where $e_{m}^{l,i}=\frac{datasize}{r_l^i(1)}$ is the execution time of task $\xi_l^i$ in cluster $m$.
If the task has one extra copy in the same cluster, ${pro}_{l}^{i}=(1-p_{m})^{\frac{datasize}{r_l^i(2)}}$ where the probability of task encountering troubles at each time slot is invariant because that once the cluster-level trouble happens, both two copies fail.
If the task has one extra copy in the other cluster $m_a$, the trouble encountering probability of the task at a time slot decreases to $p_{m}*p_{m_{a}}$, thus we have ${pro}_{l}^{i}=(1-p_{m}p_{m_{a}})^{\frac{datasize}{r_l^i(2)^'}}$.
More copies are as the same analogy.
\subsection{Formulation}
Consider a system consisting of $M$ clusters denoted by the set $\mathcal{K}$.
The clusters' topology conforms to a heavy-tailed distributions, which means that each large-scale data center are linked by multiple small edges and multiple data centers are interconnected ~\cite{Tong2016A}.
Some neighboring edges are also connective.
$M_k$ denotes the number of computing slots in the $k\text{-th}$ cluster.
The egress and ingress bandwidth restriction for the $k\text{-th}$ cluster is referred to as ${Ing}_k$ and ${Eg}_k$ respectively.

Assume a set of jobs $\mathcal{J}=\{J_1,J_2,\dots\}$ arriving over time. Job $J_i \in \mathcal{J}$ arrives at time $a_i$ and consists of $n$ tasks $\mathcal{L}_i=\{\xi_1,\xi_2,\dots,\xi_n\}$.
The flowtime of job $J_i$ is $f_i-a_i$ where $f_i$ is the job's completion time.
Our insurance problem aims to minimize the sum of job flowtimes's expectation.
The formulation is outlined below.

In formulation, $x_{l,k}^{i}=c$ indicates insuring $c$ copy for task $\xi_{l}^{i}$ in cluster $k$.
As an effective insurance, constraint in Eq. (\ref{c1}) ensures each task at least one copy to complete its work (Here, task with just one copy means that task executes without speculation).
Eq. (\ref{c:c4}) states that a task in a job can only be scheduled after the job's arrival time.
Eq. (\ref{c:c5})-Ep. (\ref{c:c8}) ensures the execution of tasks satisfying the precedence order in a job.
Let $\leqslant^{i}$ represents the partial order in job $J_i$.
The start time of tasks ($st(.)$) should obey the partial order $\leqslant^{i}$, which means that for each pair of ordered tasks, $(\xi_{u},\xi_{v})\in \leqslant^{i}$, it satisfies $st(\xi_{v}) \geq st(\xi_{u})+e_{\xi_{u}}$ where $e_{\xi_{u}}$ is the execution time of task $\xi_{u}$.
In addition, $D_l^i$ in Eq. (\ref{c:c6}) denotes the task's datasize and $f_{l}^{i}$ in Eq.(\ref{c:c7}) denotes the completion time of task $\xi_{l}^{i}$.
Constraints in Eq. (\ref{c2}), Eq. (\ref{c3}) and Eq. (\ref{c4}) forestall the terrible contention for cluster's computing slots and gate bandwidths at any time slot.
Eq. (\ref{c:c12}) is due to that the completion time of a job $f_i$ depends on its last task's completion time.

\textbf{The Difficulty of Insurance Problem:}
Without considering the speedup via task cloning and limiting the gate bandwidth of clusters, our problem can be simplified to the scheduling problem in ~\cite{zheng2013new} which has proven to be NP-hard.
Therefore, our problem is naturally NP-hard and we devote to solve the problem with an approximation bound.
\begin{eqnarray}
	\label{objective_1}\mathop{\text{min}}
	\sum_{i}\mathbb{E}[f_{i}-a_{i}] \qquad \qquad \qquad \qquad \\
	\text{s.t} \qquad \qquad \qquad
     x_{l,k}^i\in\mathbb{N}  \qquad \forall i;\forall l, k \qquad \qquad \qquad \\
	 \label{c1}\sum\limits_{k\in \mathcal{K}} x_{l,k}^{i} \ge 1  \quad \forall i;\forall l\in\mathcal{L}_{i} \qquad \qquad \qquad \\
     \label{c:c4}st(\xi_{l}^{i})\geqslant a_{i}    \quad \forall i;\forall l\in \mathcal{L}_{i} \qquad \qquad \qquad \\
     \label{c:c5}x_{l}^{j} = \sum\limits_{k\in \mathcal{K}} x_{l,k}^{i} \quad \forall i;\forall l\in \mathcal{L}_{i} \qquad \qquad \qquad \\
     \label{c:c6}\mathbb{E}[e_{l}^{i}]={D_{l}^{i}}/{r_{l}^{j}(x_{l}^{j})}\quad \forall i;\forall l\in\mathcal{L}_{i} \qquad \qquad \\
     \label{c:c7}f_{l}^{i}=st(\xi_{l}^{i})+e_{l}^{i} \quad \forall i;\forall j\in\mathcal{L}_{i} \qquad \qquad \\
     \label{c:c8}st(\xi_{u}^{i})\geq st(\xi_{v}^{i})+e_{v}^{i} \quad \forall i; \forall(v,u)\in\leqslant^{i} \qquad \qquad \\
	 \label{c2}\sum\limits_{^{i\in\mathcal{J};l\in \mathcal{L}_{i};}_{st(\xi_{l}^{i})\geq t;f_{l}^{i}\leq t}} x_{l,k}^{i} \le M_k  \quad \forall k\in \mathcal{K};\forall t \qquad \qquad \\
	 \label{c3}\sum\limits_{^{i\in\mathcal{J};l\in \mathcal{L}_{i};}_{st(\xi_{l}^{i})\geq t;f_{l}^{i}\leq t}}\sum\limits_{d=1}^{I_{l}^{i}} x_{l,k}^{i}\cdot \mathbb{E}[F_{d,k}^{T}]\le {Ing}_k \quad \forall  k\in \mathcal{K};\forall t\\
	 \label{c4}\sum\limits_{^{i\in\mathcal{J};l\in\mathcal{L}_{i};}_{st(\xi_{l}^{i})\geq t;f_{l}^{i}\leq t}}\sum\limits_{^{\quad k\in M;}_{d\in I_{l}^{i}, d = w}} x_{l,k}^{i}\cdot \mathbb{E}[F_{d,k}^{T}] \le {Eg}_w \quad \forall  w\in \mathcal{K};\forall t \\
	 \label{c:c12}f_{i}=\max_{l\in \mathcal{L}_{i}}f_{l}^{i} \quad \forall i \qquad \qquad \qquad
\end{eqnarray}
\section{PingAn Insurance}\label{PingAn}
\subsection{Algorithm Design}
Motivated by the work in ~\cite{Xu2015Task} which speed up the online job via task cloning in a single cluster, we extend its idea that the jobs with the least remaining workload shares the machines of a cluster and design our PingAn insurance algorithm.
PingAn works at the beginning of each time slot.
Out of practicality, we only use the job knowledge available at the current scheduled stage.
The effective workload of a job can be characterized by the unprocessed data size of its current stage.
The jobs with higher priority has less unprocessed data size than jobs with lower priority.

Let $N(t)$ be the number of alive jobs at time $t$ and $\varepsilon$ be a value in $(0,1)$.
The first $\varepsilon N(t)$ jobs with the least unprocessed data size fairly shares the geo-distributed slots, which means that each prior job is admitted to obtain at most $h_{i}(t)=\left \lceil {\frac{\sum\limits_{k\in \mathcal{K}}M_{k}}{\varepsilon N(t)}}\right \rceil$ slots and the other jobs with lower priority can obtain nothing.
After deciding the promissory slots for each job, PingAn insures the cluster-specified slots to each task after multiple rounds.
Notice that, in any round of insurance, the total slots number insured for a job is limited to its promissory slot number $h_{i}$.

\begin{algorithm}[!tbp]
  \caption{PingAn Insurance Algorithm}\label{alg:PingAn}
  \KwIn{$\mathcal{J}(t)$, the set of alive jobs at current time slot $t$;
	
	$M_{\mathcal{K}}(t)$, the available slots at time $t$;

	$V^A_{\mathcal{K}}(t), V^T_{(\mathcal{K},\mathcal{K})}(t), p_{\mathcal{K}}(t)$, the resource condition at time $t$;

	$Ing_{\mathcal{K}}, Eg_{\mathcal{K}}$, the gate bandwidth limit of clusters;}
  \KwOut{An insurance plan}
  Sort the jobs in $\mathcal{J}(t)$ according to the ascend order of unprocessed datasize\;
  \For{each $J_i\in \mathcal{J}(t)$}
  {
    Compute $g_{i}(t)$, the slot number promised to job $J_i$\;
    Count $\theta_i(t)$, the number of slots running $J_i$'s tasks\;
  }
  $N_{slot}=0$, the assigned slot number in a round\;
  \For{each $J_i\in \mathcal{J}(t)$ and $g_i(t)-\theta_i(t)>0$}
  {
    Extract waiting tasks to $\mathcal{L}_i^{0}$\;
    \For{each $\xi_{l}\in \mathcal{L}_i^{0}$ and $g_i(t)-\theta_i(t)>0$}
    {
        Try to do efficient-first insurance for $\xi_{l}$\;
        \If{insuring succeeds}
        {
         $\theta_i(t)$++, $N_{slot}^{i}$++\;
        }
    }
  }
  \If{$N_{slot}$ == 0}
  {
    return\;
  }
  $N_{slot}=0$\;
  \For{each $J_i\in \mathcal{J}(t)$ and $g_i(t)-\theta_i(t)>0$}
  {
    Extract tasks assigned a slot to $\mathcal{L}_i^{1}$\;
    Compute $pro_l^i$ for each $\xi_l\in \mathcal{L}_i^{1}$\;
    Sort tasks in $\mathcal{L}_i^{1}$ in the ascend order of $pro_l^i$\;
    \For{each $\xi_{l}\in \mathcal{L}_i^{1}$ and $g_i(t)-\theta_i(t)>0$}
    {
        Try to do reliability-aware insurance for $\xi_{l}$\;
        \If{insuring succeeds}
        {
         $\theta_i(t)$++, $N_{slot}^{i}$++\;
        }
    }
  }
  \If{$N_{slot}$ == 0}
  {
    return\;
  }
  \While{true}
  {
      $N_{slot}=0$\;
      \For{each $J_i\in \mathcal{J}(t)$ and $g_i(t)-\theta_i(t)>0$}
      {
        Extract tasks copied in the last round to $\mathcal{L}_i^{\geqslant2}$\;
        \For{each $\xi_{l}\in \mathcal{L}_i^{\geqslant2}$ and $g_i(t)-\theta_i(t)>0$}
        {
            Try to do resource-saving insurance for $\xi_{l}$\;
            \If{insuring succeeds}
            {
             $\theta_i(t)$++, $N_{slot}^{i}$++\;
            }
        }
      }
      \If{$N_{slot}$ == 0}
      {
        return\;
      }
  }
\end{algorithm}

In the first round, PingAn only insures at most one slot for each task in order of job priority according to an efficiency-first principle.
When its turn arrives, a task can obtain a slot and run with currently the best execution rate $\mathbb{E}[r_{l}^{i}(1)]$, as long as the related gate bandwidth restrictions are satisfied and the execution rate $\mathbb{E}[r_{l}^{i}(1)]$ is not worse than $1/(\varepsilon+1)$ fraction of the global optimal rate $\mathbb{E}^O[r_{l}^{i}(1)]$ obtained by the task when only it executes in the system.
If the bandwidth are not enough or the current best slot's rate is too worse, the task waits for the next insurance.

In the second round, PingAn utilizes the current idle slots to improve the reliability of each job in priority order following reliability-aware principle.
Inside each job, PingAn prefers to insure an extra copy for the tasks with the worse trouble-exemption probability ${pro}_{l}^{i}$.
After meeting the bandwidth restrictions and the lower limit of execution rate, the slot is selected from the cluster where the copy execution can improve the task's ${pro}_{l}^{i}$ to the greatest extent.

In the third or the later round, beside following the efficient-first principle, PingAn starts to consider the opportunity cost of a slot being an extra copy since the slot can be saved to complete many more tasks in the next insurances.
Given that an insured task in the third round already has a copy with the best efficiency and an extra copy improving the execution reliability, a slot used to run the third copy of the task plays a relatively less role on performance improvement than using the slot to run the first or second copy of the other tasks.
Thus, in the third and the later round, PingAn conservatively insures a copy for a task only if it saves both time and resources consumed, which is referred to as resource-saving copy.
To be specific, supposing to decide whether to schedule the $c\text{-th}$ copy of a task in cluster $k$ ($c\geq 2$), PingAn calculates the execution rate $\mathbb{E}[r_{l}^{i}(c)]$ and the corresponding execution time $\mathbb{E}^{c}[e_{l}^{i}]$ of the task if the extra copy performs.
Only if the $\mathbb{E}^{c-1}[e_{l}^{i}] > \frac{c+1}{c}\mathbb{E}^{c}[e_{l}^{i}] $, the extra copy is permitted to insuring for the task. Algorithm \ref{alg:PingAn} summarizes in detail how PingAn insures for the jobs.

As applied in PingAn, the efficient-first principle means that the efficiency should be satisfied priori to the reliability for a task execution, which is motivated by the factor that resource performance fluctuates frequently but the cluster-level trouble is occasional.
Thus, when PingAn insures the first slot to the task, aiming at the efficiency can drastically and directly reduce the execution time.

The efficiency-first principle in the first round also enforces that the efficiency of a job with promissory slots should be satisfied priori to the reliability of a job with higher priority.
Recalling that our objective is to minimizing the sum of job flowtimes, a slot used for job's reliability generally contributes less to the objective than a slot used for job's efficiency since the former has less chance to save the execution time from occasional cluster-level troubles.
Therefore, PingAn insures only essential copy for all qualified jobs' tasks in the first round and insures extra copies in the later rounds, which is referred to as Efficient-First Allocation (EFA) among jobs.
The other alternative is to insure both essential and extra copies for each job in priority order, which is referred to as Job Greedy Allocation (JGA) among jobs.
We compare the practical performance of two candidates in Section \ref{ch:principleExp} to verify the correctness of efficient-first principle.

The reliability-aware part is to emphasize that in despite of the occasionality, a cluster-level trouble can harm a large scale of job performances. Thus, the reliability is indispensable towards performance improvement.
We verify the practical effect of the efficient-first reliability-aware principle in Section \ref{ch:principleExp}.
\subsection{Analysis of PingAn}
In this section, we assumes resource augmentation ~\cite{Berman1998Speed} for PingAn and derive the approximation bound via the method of potential function analysis, .
In the analysis, time is continuous and we do not consider the cluster-level unreachable troubles in our approximate bound analysis because that the impact of such failures on job flowtime is hard to be measured.
Before the details of potential function analysis, we first prove the following Proposition \ref{proposition1} and deduce the final approximation bound with the help of it.

\begin{proposition}\label{proposition1}
For any integer $b\geq a > 0$, we have $\frac{r_{l}^{i}(a)}{a}\geq \frac{r_{l}^{i}(b)}{b}$ under PingAn algorithm.
\end{proposition}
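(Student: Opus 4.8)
The plan is to read $r_l^i(\cdot)$ as the expected running maximum of the per-copy execution rates and to prove it has diminishing marginal returns. Recall that once PingAn has placed $x$ copies of $\xi_l^i$ in clusters $m_1,\dots,m_x$ we have $r_l^i(x)=\mathbb{E}[\max(W_1,\dots,W_x)]$ with $W_j:=V_{m_j}^{l,i}$ mutually independent, and --- because the efficiency-first rule always grabs the currently best-performing cluster for a new copy --- the clusters are chosen in non-increasing order of quality, i.e.\ $W_{j+1}$ is stochastically dominated by $W_j$ (reading ``best rate'' as stochastically largest; more on this below). Put $\Delta_k:=r_l^i(k)-r_l^i(k-1)$. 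A one-line rearrangement shows that $r_l^i(k)/k\le r_l^i(k-1)/(k-1)$ is equivalent to $(k-1)\Delta_k\le\sum_{j=1}^{k-1}\Delta_j$, i.e.\ to ``$\Delta_k$ is at most the average of $\Delta_1,\dots,\Delta_{k-1}$''. Hence it suffices to prove that $(\Delta_k)_{k\ge1}$ is non-increasing, after which chaining $r_l^i(k)/k\le r_l^i(k-1)/(k-1)$ from $k=a+1$ to $k=b$ yields $r_l^i(a)/a\ge r_l^i(b)/b$.

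To show $(\Delta_k)$ is non-increasing, write $M_k=\max(W_1,\dots,W_k)$ and use $\max(M_{k-1},W_k)=M_{k-1}+(W_k-M_{k-1})^{+}$, so that $\Delta_k=\mathbb{E}[(W_k-M_{k-1})^{+}]$. I would then bound $\Delta_{k+1}$ in two moves. First, $M_k\ge M_{k-1}$ and monotonicity of $t\mapsto t^{+}$ give $(W_{k+1}-M_k)^{+}\le(W_{k+1}-M_{k-1})^{+}$ pointwise, hence $\Delta_{k+1}\le\mathbb{E}[(W_{k+1}-M_{k-1})^{+}]$. Second, since $W_{k+1}$ and $W_k$ are independent of $M_{k-1}$, conditioning on $M_{k-1}=c$ and using that $y\mapsto(y-c)^{+}$ is non-decreasing together with $W_{k+1}\preceq_{\mathrm{st}}W_k$ gives $\mathbb{E}[(W_{k+1}-c)^{+}]\le\mathbb{E}[(W_k-c)^{+}]$ for every $c$; integrating over the law of $M_{k-1}$ gives $\mathbb{E}[(W_{k+1}-M_{k-1})^{+}]\le\mathbb{E}[(W_k-M_{k-1})^{+}]=\Delta_k$. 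Combining the two moves, $\Delta_{k+1}\le\Delta_k$.

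The delicate point --- and the one I expect to be the main obstacle --- is the stochastic-dominance claim $W_{j+1}\preceq_{\mathrm{st}}W_j$. PingAn's rule, as written, ranks candidate clusters only by the \emph{mean} rate $\mathbb{E}[r_l^i(1)]$, and mean-ordering by itself is not enough: if $W_1=W_2\equiv1$ deterministically while a third cluster yields rate $1/p$ with probability $p$ and $0$ otherwise (mean at most $1$), then placing the copies in clusters $W_1,W_2,W_3$ gives $r_l^i(2)=1$ but $r_l^i(3)=2-p$, so $r_l^i(2)/2<r_l^i(3)/3$ and the proposition fails. I would therefore lean on the modeling assumption that the per-cluster rate distributions $f_m^{l,i}$ are stochastically comparable, so that the cluster PingAn selects at step $j+1$ is stochastically no better than the one selected at step $j$; with that, the two-move argument closes. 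If only mean-ordering were available one would need extra structure (for $b=2$, $\max\le W_1+W_2$ already forces $r_l^i(2)\le2r_l^i(1)$, but no such elementary bound is available for larger $b$), so some hypothesis of this strength is genuinely necessary rather than a mere convenience.
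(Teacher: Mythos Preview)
Your argument is correct and, in one important respect, more careful than the paper's. The two proofs take genuinely different routes. You work with the increments $\Delta_k=\mathbb{E}[(W_k-M_{k-1})^{+}]$ and show they are non-increasing, which immediately gives $r_l^i(k)/k$ non-increasing; the paper instead manipulates the CDFs directly, writing $q_n^r(v)=\sum_{x}q_x(v)\prod_{j\neq x}Q_j(v)$ and reducing $(n+1)r_l^i(n)\ge n\,r_l^i(n+1)$ to the inequality $\sum_{x=1}^{n}\int v\,q_x(v)\prod_{j\neq x}Q_j(v)\,dv\ \ge\ n\int v\,q_{n+1}(v)\prod_{j=1}^{n}Q_j(v)\,dv$. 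Your increment approach is cleaner and makes the structural hypothesis explicit; the paper's CDF decomposition is more computational but reaches the same checkpoint.

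Crucially, the obstacle you flagged is real and is present in the paper's own proof. The paper invokes only the mean ordering $\mathbb{E}[V_x]\ge\mathbb{E}[V_{n+1}]$ and then passes, without further justification, from $\int v\,q_x(v)\,dv\ge\int v\,q_{n+1}(v)\,dv$ to the weighted inequality $\int v\,q_x(v)\prod_{j\neq x}Q_j(v)\,dv\ge\int v\,q_{n+1}(v)\prod_{j\neq x}Q_j(v)\,dv$. That implication is invalid in general: multiplying two densities with ordered means by a common non-constant weight $v\prod_{j\neq x}Q_j(v)$ need not preserve the ordering of the integrals. Your counterexample ($W_1=W_2\equiv1$, $W_3\in\{0,(1-\delta)/p\}$ with $\Pr[W_3>0]=p$ small) shows that under mean-ordering alone one can have $r_l^i(2)/2<r_l^i(3)/3$, so the proposition as stated is false and the paper's step cannot be repaired without extra structure. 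The stochastic-dominance hypothesis you add, that the successive per-cluster rate laws are stochastically ordered, is exactly what both arguments need; with it, your two-move bound $\Delta_{k+1}\le\mathbb{E}[(W_{k+1}-M_{k-1})^{+}]\le\mathbb{E}[(W_{k}-M_{k-1})^{+}]=\Delta_k$ goes through, and so does the paper's weighted-integral step (since stochastic dominance gives $\mathbb{E}[g(V_x)]\ge\mathbb{E}[g(V_{n+1})]$ for every non-decreasing $g$).
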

\begin{proof}
See Appendix \ref{ch:appendix1}.
\end{proof}
Depending on Proposition \ref{proposition1}, we prove the following Theorem \ref{the:third}.
Under a resource augmentation assumption, the resource speed in PingAn is $1+\varepsilon$ times faster than the one in the optimal adversary algorithm.
Theorem \ref{the:third} states that the sum of job flowtimes in PingAn is within $o(\frac{1}{\varepsilon^2+\varepsilon})$ factor of the optimal algorithm with a resource augmentation.
\begin{theorem}\label{the:third}
PingAn is $(1+\varepsilon)\text{-speed}\ o(\frac{1}{\varepsilon^2+\varepsilon})\text{-competitive}$ approximated algorithm for the sum of the expectation of job flowtimes when $0<\varepsilon<1$.
\end{theorem}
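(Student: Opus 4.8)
The plan is to use the standard amortized local-competitiveness (potential function) argument for online flowtime scheduling with resource augmentation, adapted to the multi-copy setting via Proposition~\ref{proposition1}. Let $\mathrm{ALG}(t)$ denote the number of alive jobs under PingAn at time $t$ and $\mathrm{OPT}(t)$ the number under the optimal $(1$-speed$)$ adversary, so that the total flowtimes are $\int_0^\infty \mathrm{ALG}(t)\,dt$ and $\int_0^\infty \mathrm{OPT}(t)\,dt$ respectively. I would define a potential $\Phi(t)$ built from the unprocessed data sizes of PingAn's alive jobs ranked in increasing order --- concretely something of the shape $\Phi(t)=\frac{c}{\varepsilon}\sum_{i} \mathrm{rank}_i(t)\cdot \frac{x_i(t)}{\text{(rate available to rank } i)}$, where $x_i(t)$ is the remaining data size of the $i$-th smallest PingAn job and the rank weighting encodes that a job with $k$ jobs ahead of it only gets a $1/(\varepsilon N)$ share of the slots. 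The goal is to show $\Phi$ is $0$ before any arrival and after all jobs finish, has no positive jumps at arrivals, and satisfies the running inequality $\mathrm{ALG}(t) + \frac{d\Phi}{dt} \le \alpha\cdot\mathrm{OPT}(t)$ for $\alpha = o(\frac{1}{\varepsilon^2+\varepsilon})$; integrating then yields the theorem.

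The key steps, in order: (i) verify the boundary and arrival conditions for $\Phi$ --- arrivals only increase ranks of jobs with larger remaining size, and with the right sign convention this does not increase $\Phi$; (ii) bound the \emph{negative drift} of $\Phi$ due to PingAn's own processing --- here is where the $(1+\varepsilon)$-speed and the sharing rule enter: the first $\varepsilon N(t)$ jobs each run on $\Theta(\sum_k M_k/(\varepsilon N(t)))$ slots, and by Proposition~\ref{proposition1} the aggregate processing rate a job achieves on $x$ copies is at least $x$ times the per-copy-normalized rate, so PingAn's prioritized jobs drain data at a rate that, after multiplying by the $(1+\varepsilon)$ speedup, dominates $\frac{1}{\varepsilon}\mathrm{ALG}(t)$ up to lower-order terms; (iii) bound the \emph{positive drift} of $\Phi$ caused by OPT's processing --- OPT can decrease the $x_i$'s it "owns," but each unit of OPT's processing raises $\Phi$ by at most $O(\frac{1}{\varepsilon}\cdot\frac{1}{\text{rate}})$ per job it is working on, and since OPT also runs on at most the full slot pool this contributes $O(\frac{1}{\varepsilon^2})\mathrm{OPT}(t)$ to the bound; (iv) combine (ii) and (iii) to get the local inequality with $\alpha = O(\frac{1}{\varepsilon^2+\varepsilon})$, then handle precedence constraints by the usual observation that a stage cannot start before its predecessors finish, so the per-stage bound composes over the DAG without loss. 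Finally, take expectations: because $r_l^i(x)=\mathbb{E}[\max(\dots)]$ is already defined as an expectation of rates, the whole argument runs at the level of expected execution times and the bound is on $\sum_i \mathbb{E}[f_i-a_i]$ directly.

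The main obstacle I expect is step~(ii): making the negative-drift estimate honest in the presence of the gate-bandwidth constraints (Eqs.~(\ref{c3})--(\ref{c4})) and the fact that PingAn may \emph{refuse} a slot to a task whose best available rate falls below the $1/(\varepsilon+1)$ fraction of its global-optimal rate $\mathbb{E}^O[r_l^i(1)]$. One must argue that this refusal rule only costs a constant factor --- i.e., that the rate PingAn actually secures for each prioritized job is within $\Theta(1/(1+\varepsilon))$ of what the slot-count alone would suggest --- and that the bandwidth caps, being the same caps OPT faces, cannot separate PingAn's achievable aggregate rate from OPT's by more than the speed augmentation absorbs. A secondary delicate point is choosing the rank weighting in $\Phi$ so that the telescoping in (ii) produces exactly $\mathrm{ALG}(t)$ and not merely $\Omega(\varepsilon\,\mathrm{ALG}(t))$; this is the place where the $\frac{1}{\varepsilon^2+\varepsilon}$ rather than $\frac{1}{\varepsilon}$ competitive ratio is forced, and I would expect the clean way to see it is to split into the "shared" prefix of jobs and the starved suffix and bound each separately. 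Routine calculus (differentiating $\Phi$, summing geometric-like series in the ranks) I would defer to the appendix.
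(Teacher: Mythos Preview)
Your high-level strategy (potential function plus resource augmentation, with Proposition~\ref{proposition1} supplying the sublinearity of rate-versus-copies) is the same as the paper's, but the specific potential you propose is not. The paper does \emph{not} use a rank-weighted potential at the job level. Instead it works at the \emph{task} level with a lag-based potential: setting $z_l^i(t)=\max\bigl(d_l^{iP}(t)-d_l^{iO}(t),0\bigr)$ for the difference between PingAn's and OPT's remaining work on task $\xi_l^i$, it takes
\[
\Psi(t)=\frac{1}{\varepsilon^2}\sum_{J_i\in\eta^P(t)}\sum_{\xi_l^i\in J_i}\frac{z_l^i(t)}{r_l^i\!\bigl(M_{\mathcal{K}}/(\varepsilon N(t))\bigr)}.
\]
The normalizing rate in the denominator already carries the dependence on $N(t)$, so no rank weight is needed. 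The drift is then split into $\Delta^O(t)$ (OPT decreasing $d_l^{iO}$) and $\Delta^P(t)$ (PingAn decreasing $d_l^{iP}$); Proposition~\ref{proposition1} is applied to $r_l^i$ in the $\Delta^P$ bound, and the concave analogue (Proposition~\ref{pro:second}) to OPT's speedup $s_l^i$ in the $\Delta^O$ bound. The $1/(1+\varepsilon)$ rate-threshold rule enters exactly once, to lower-bound $r_l^i(M_{\mathcal K}/\varepsilon N(t))$ by $\alpha\, s_l^i(M_{\mathcal K}/\varepsilon N(t))$ with $\alpha>1/(1+\varepsilon)$; this is what makes the coefficient on the PingAn side strictly positive after integration. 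The paper's proof does not separately treat the gate-bandwidth constraints or the DAG precedence; these are absorbed into the task-level rate model.

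There is also a concrete gap in your arrival argument. With $\Phi(t)=\frac{c}{\varepsilon}\sum_i \mathrm{rank}_i(t)\,x_i(t)/(\text{rate})$ where $x_i$ is the \emph{remaining} data size (not the lag), a newly arrived job contributes a strictly positive term $\mathrm{rank}_{\text{new}}\cdot x_{\text{new}}/(\text{rate})$ to $\Phi$, so ``no positive jump at arrivals'' is false as stated. The standard fix is exactly what the paper does: replace $x_i$ by the lag $\max(d^{iP}-d^{iO},0)$, which is zero at arrival. If you switch to a lag-based numerator your rank-weighted scheme can be made to work, but at that point the rank weighting buys nothing over the paper's simpler uniform $1/\varepsilon^2$ prefactor, and you would still need the $\alpha>1/(1+\varepsilon)$ comparison between $r_l^i$ and the adversary's concave $s_l^i$ to close the argument.
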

\begin{proof}
See Appendix \ref{ch:appendix2}.
\end{proof}
\section{Implementation on Real System}\label{ch:systemImple}
We develop the PingAn in Spark on Yarn, a general geo-distributed data analysis system, and run a series of typical workloads to consolidate the practicality and acceleration ability of PingAn.

\textbf{Testbed:}
Our experiments are deployed on 10 VMs running a 64-bit Ubuntu 16.04.
Four of them have 8 CPU cores and 20GB memory, the others have 4 CPU cores and 10GB memory.
We regard the 10 VMs as ten different edge clusters and the number of containers concurrently running on the VM corresponds to the computing slots number in the edges.
We run two {\tt ResourceManager}s in charge of 5 VMs respectively.
We use the Wondershaper to limit the egress and ingress bandwidth of each VM.
We intentionally run benchmarks in each VM to consume its spare resources to different extent (Ubench for CPU and memory, Bonnie for disk I/O and Iperf for external bandwidth) in order to cause performance difference via resource contention.
In addition, a script file is running for executing shutdown command in VM according to the preset probability to imitate the cluster-level errors.
The adjustable parameter $\varepsilon$ in PingAn is set to be 0.6.

\textbf{Applications:}
The workload includes 88 jobs such as WordCount, Iterative machine learning and PageRank. The variation in input sizes is based on real workloads from Yahoo! and Facebook ~\cite{vavilapalli2013apache} with a reduced scale as shown in Table \ref{tab:RealJobWorkload}.
We randomly distribute the input across the 10 VMs.
The job submission time follows an exponential distribution.
The average workload intensity is 3 jobs per 5 min.

\textbf{Baseline:}
We compare PingAn with the Spark with delay scheduling for tasks and fair scheduling for jobs and the speculative Spark when Spark's default speculation mechanism works.

\textbf{Metric:}
We focus on the average flowtime of jobs and the cumulative distribution function (CDF) of job flowtimes.

\begin{table}[!htbp]
\setlength{\abovecaptionskip}{2pt}
\setlength{\belowcaptionskip}{-5pt}
\caption{Workload Constitution}
\centering
\begin{tabular}{l|c|c|c}
\hline
{JobType}&WordCount&Iterative ML&PageRank \\
\hline
Small$(46\%)$&100-200MB&130-300MB&150-400MB\\
\hline
Medium$(40\%)$&0.7-1.5GB&1.3-1.8GB&1-2GB\\
\hline
Large$(14\%)$&3-5GB&2.5-4GB&3.5-6GB\\
\hline
\end{tabular}
\label{tab:RealJobWorkload}
\end{table}
\vspace{-0.2cm}
\begin{figure}[!htbp]
\setlength{\abovecaptionskip}{0pt}
\setlength{\belowcaptionskip}{-10pt}
\centering
\includegraphics[width=5.5cm,height=3.5cm]{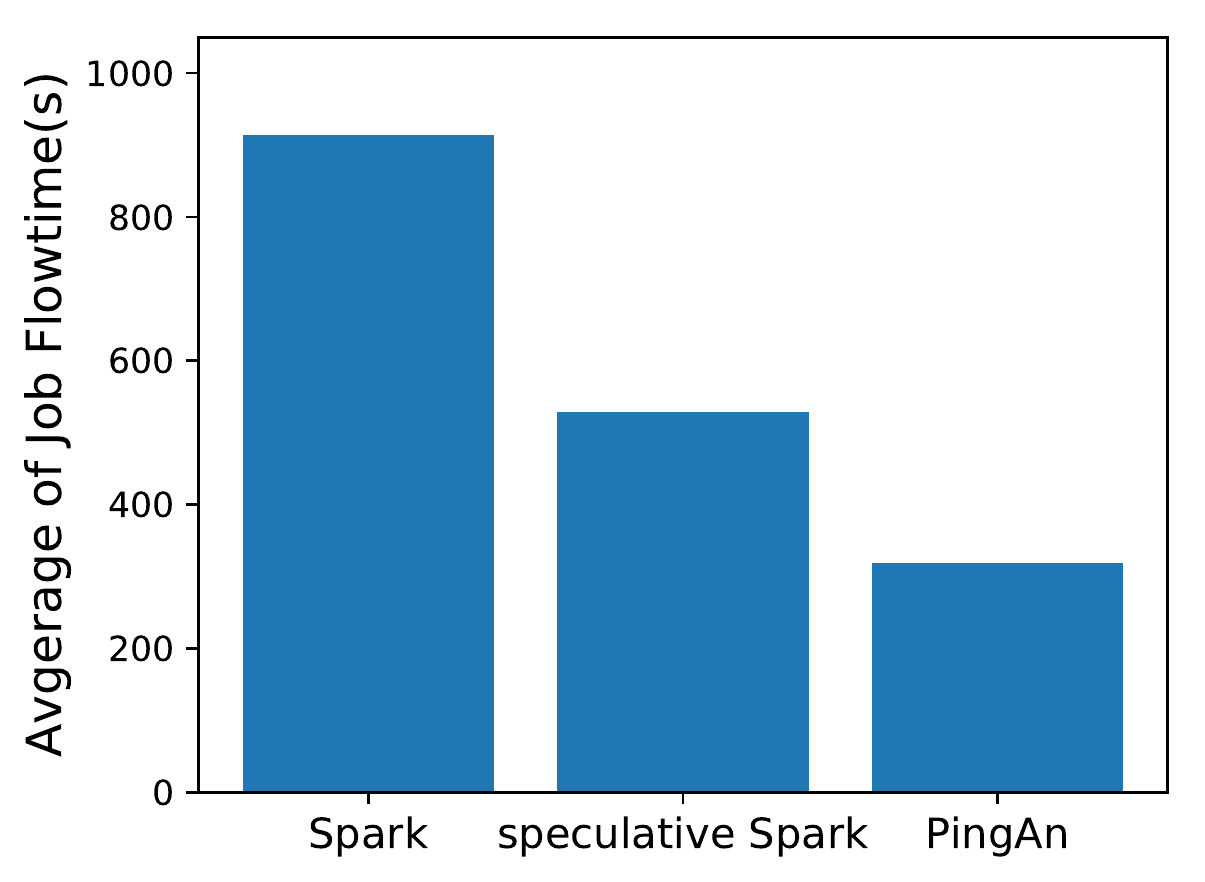}
\caption{The average job flowtime comparison under PingAn, Spark and speculative Spark execution.}
\label{fig:Spark_avg}
\end{figure}
\vspace{-0.2cm}
\begin{figure}[!htbp]
\setlength{\abovecaptionskip}{0pt}
\setlength{\belowcaptionskip}{-10pt}
  \centering
  \subfigure[The flowtime CDF of jobs with $<$500s flowtime.]{
    \label{fig:s1}
    \includegraphics[width=0.48\linewidth]{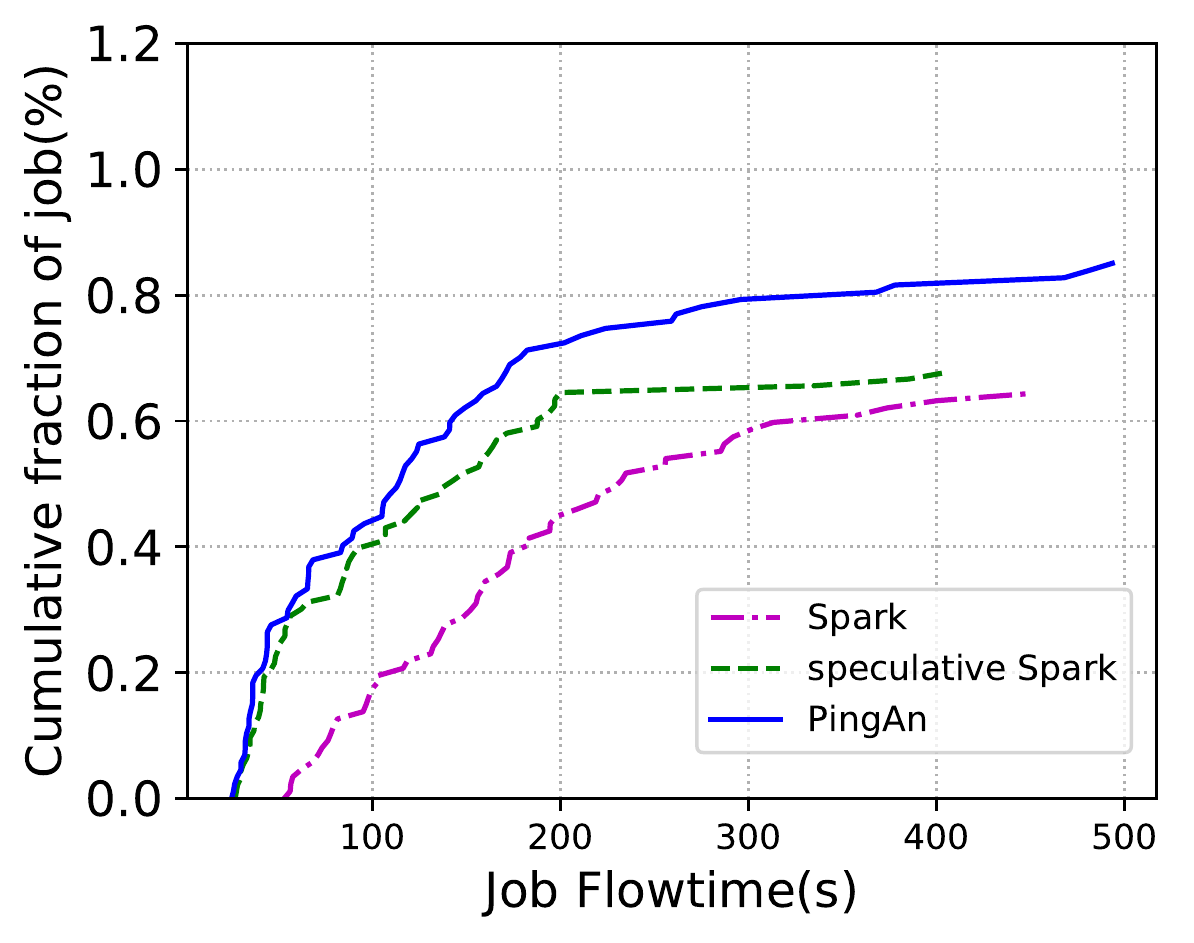}}
 \subfigure[The flowtime CDF of jobs with $>$300s flowtime.]{
    \label{fig:s2}
    \includegraphics[width=0.48\linewidth]{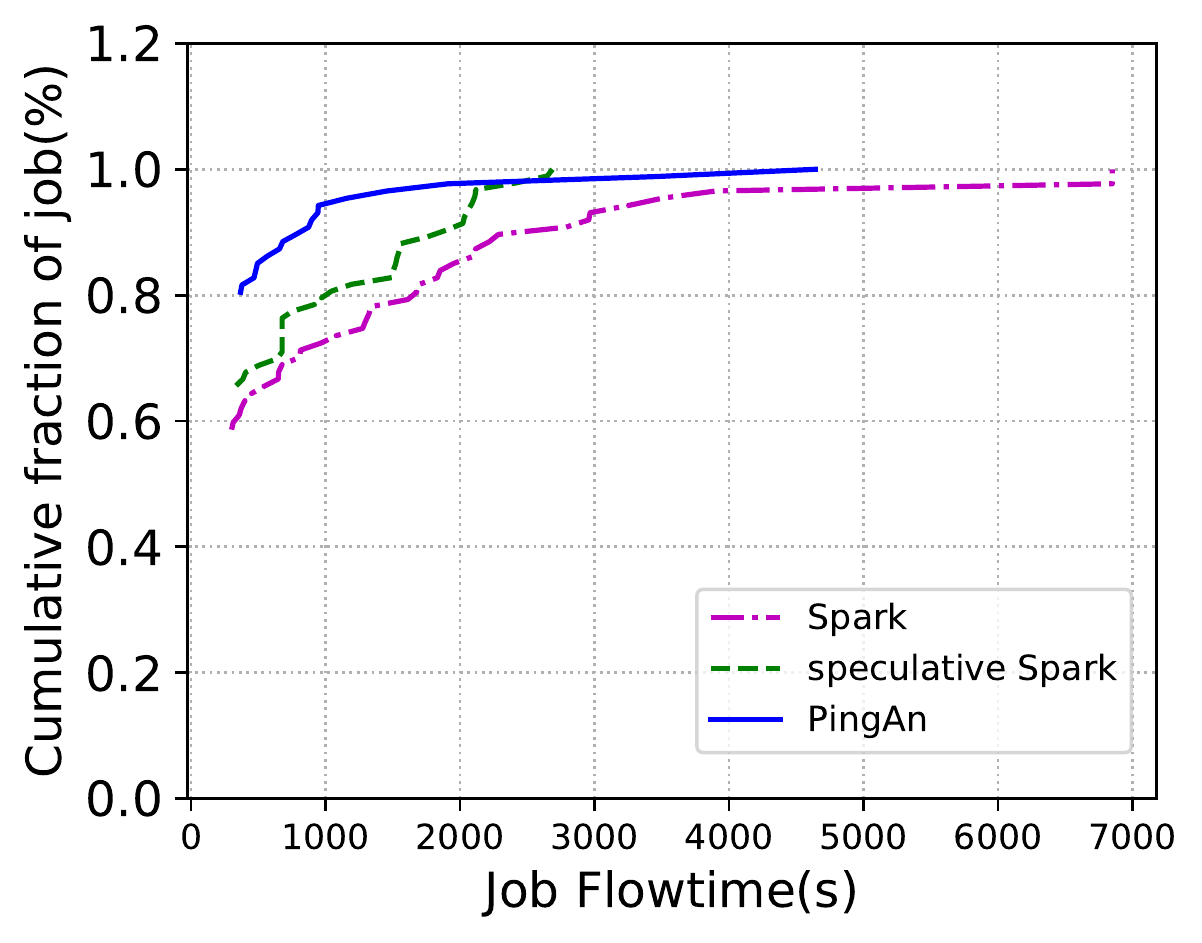}}
  \caption{The CDF of job flowtimes under PingAn, Spark and speculative Spark execution.}
  \label{fig:Spark_cdf}
\end{figure}

Figure \ref{fig:Spark_avg} shows that PingAn reduces the average job flowtime by 39.6$\%$ comparing to the default speculation mechanism in Spark.
As exhibited concretely in Figure \ref{fig:Spark_cdf}, PingAn efficiently reduces the job flowtimes via coordinating resource contention among jobs and insuring proper copies for tasks.
Figure \ref{fig:s1} depicts the CDF of flowtimes for jobs whose flowtime is between 0 and 500 seconds under three algorithm.
It indicates that 72.4$\%$ jobs in PingAn finishes within 200 seconds while the proportion in speculative Spark and Spark are 65.6$\%$ and 45.9$\%$ respectively.

Figure \ref{fig:s2} depicts the CDF of flowtime for jobs whose flowtime is larger than 300 seconds.
It shows that the detect-based speculation mechanism in Spark inhibits the overlong tasks.
PingAn arranges copies at the execution start to avoid straggler, thus, it is helpless in face of slow tasks, while it saves the system's cost of remote monitoring.
\section{Performance Evaluation}
\begin{table*}[!htbp]
\centering
\caption{Simulation Experiments settings}
\centerline{
\begin{tabular}{l|c|c|c|c|c|c|c|c}
\hline
\multirow{2}*{ClusterType}& \multirow{2}*{Proportion}& \multirow{2}*{VM Number}& \multirow{2}*{$^{\text{Gate Bandwidth}}_{\ \;\text{Limit Ratio}}$}& \multicolumn{2}{c|}{VM Power}& \multicolumn{2}{c|}{WAN Bandwidth}& \multirow{2}*{$^{\text{Unreachability}}_{\ \text{Probability}}$}\\
\cline{5-8}
{}&{}&{}&{}&Mean(mips)&$^{\text{Relative Standard}}_{\text{Deviation (RSD)}}$&Mean(kb/s)&RSD\\
\hline
Large-scale&5$\%$&500-1500&55$\%$-75$\%$&174-355&0.25-0.6&\multirow{3}*{64-256}&\multirow{3}*{0.2-0.5}&0.002-0.011\\
\cline{1-6}
\cline{9-9}
Medium-scale&20$\%$&50-500&65$\%$-85$\%$&128-241&0.55-0.85&&&0.02-0.2\\
\cline{1-6}
\cline{9-9}
Small-scale&75$\%$&10-50&75$\%$-95$\%$&68-179&0.35-0.75&&&0.05-0.5\\
\hline
\end{tabular}
}
\label{tab:SimulationSetUp}
\end{table*}
In simulations, we expand experiments' scale to verify the acceleration effect of PingAn on different conditions.
\subsection{Methodology}
\textbf{Simulation Setup:}
We modify CloudSim to support our simulation experiments.

The clusters launched in simulation have large, median and small three kinds of scale.
We use the BRITE Topo generator to create 100 clusters with a heavy-tailed distribution around the world.
We sort 100 clusters in the decreasing order of their degrees and let the first $5\%$ clusters be the large-scale cluster, the following $20\%$ be the medium one and the rest be the small one.

Table \ref{tab:SimulationSetUp} shows the various kinds of parameters' range setting in different scale cluster.
Some parameters' range (VM Power and WAN Bandwidth) is based on the real performance analysis experiments on Amazon EC2 or other public clouds ~\cite{dejun2010ec2,schad2010runtime,Zaharia2008Improving}, and some parameters' range is set to be wide for excavating the ability of our algorithm.
Specially, the Gate Bandwidth Limit Ratio in the fourth column of Table \ref{tab:SimulationSetUp} indicates the ratio of the egress/ingress bandwidth to the sum of the VMs' external bandwidth of a cluster.
We assumes the VM power and inter-cluster bandwidth to follow a normal distribution as observed in ~\cite{schad2010runtime}.

\textbf{Workloads:}
We construct an synthetic workloads containing 2000 Montage workflows.
Montage workflow assembles high-resolution mosaics of region of the sky from raw input data, which consist of the tasks with high demand of both data transfer and computing.
The job size distribution refers to the traces in Facebook's production Hadoop cluster ~\cite{Ananthanarayanan2010Reining,Ananthanarayanan2013Effective,Ananthanarayanan2014GRASS,ananthanarayanan2012pacman} that $89\%$, $8\%$ and $3\%$ of jobs are with small (1-150), medium (151-500) and large ($>$500) task numbers respectively.
We randomly disperse the raw input data of each workflow to the edges as well as some medium-scale clusters.
The workflow inter-arrival times are derived from a Poisson distribution.
We adjust system load condition via the Poisson parameters $\lambda$ from 0.02 to 0.15.

\textbf{Baseline:}
We compare PingAn with four baseline algorithms.
\begin{enumerate}[(1)]
\item
Flutter. Flutter is a geo-distributed scheduler to optimize stage completion time.
\item
Iridium. Iridium optimizes data and task placement to reduce the WAN transfer during the job execution.
\item
Flutter+Mantri.
Mantri is demonstrated to be the best detection-based speculation mechanism inside cluster.
\item
Flutter+Dolly.
Dolly is a passive cloning mechanism and performs better than Mantri under the Facebook's trace.
\end{enumerate}

\textbf{Metric:}
We focus on the same metric in Section \ref{ch:systemImple}.
In addition, for Dolly, Mantri and PingAn, we focus on their reduction in job flowtime of the Flutter as well as the CDF of the reduction ratio.
Under each setting, we run our workloads ten times and calculate the average flowtime of the ten executions for each job as its final flowtime.
\subsection{Comparison against Baselines under Different Load}
\begin{figure}[!tp]
\vspace{-0.2cm}
\setlength{\abovecaptionskip}{-3pt}
\setlength{\belowcaptionskip}{-10pt}
    \centering
        \includegraphics[width=0.9\linewidth]{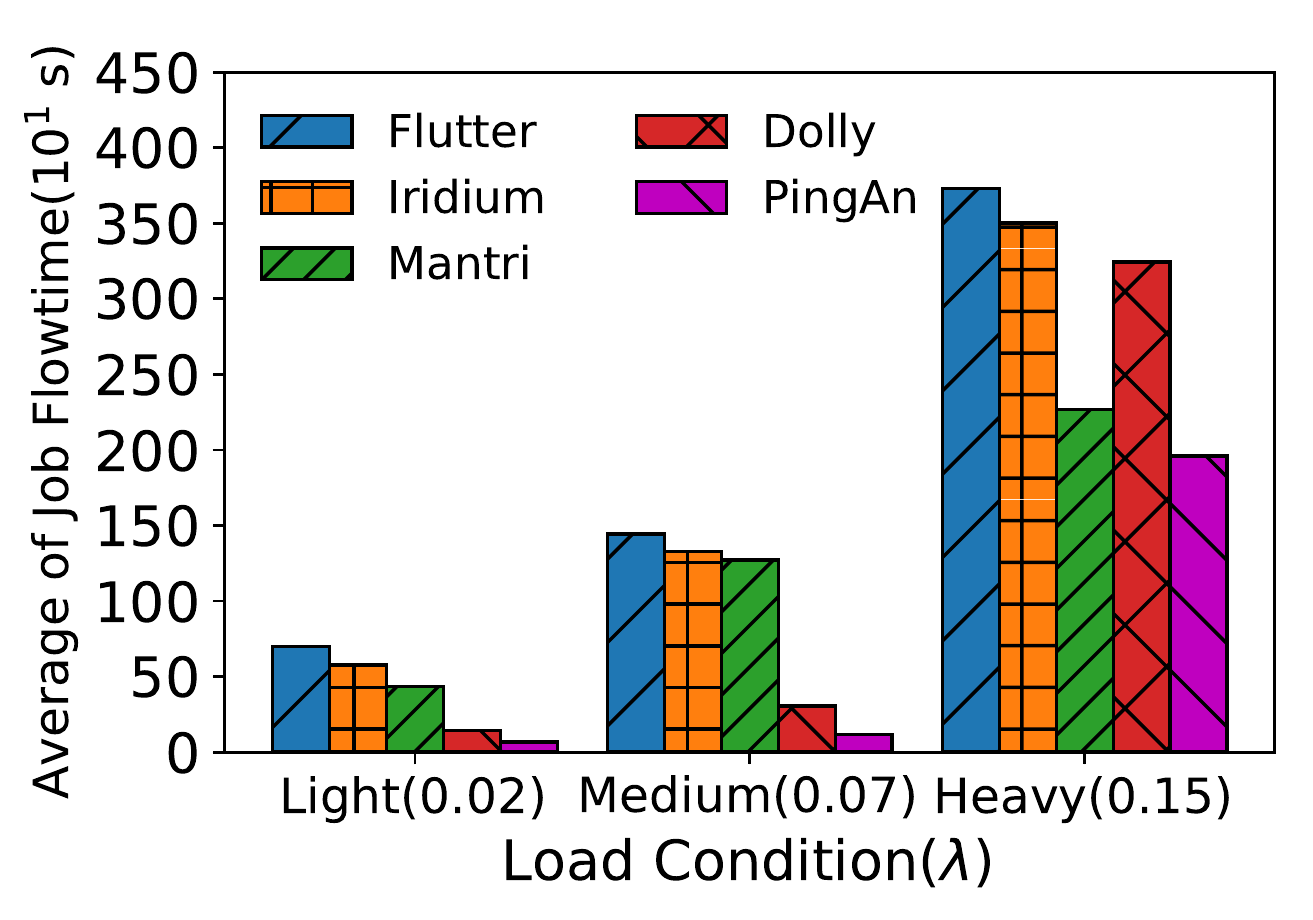}
    \caption{The performance comparison under different load condition.}
    \label{fig:performance-load}
\end{figure}
We compare the the average job flowtime of PingAn with four baselines under light, medium and heavy load respectively.
We set $\varepsilon=0.8$ for PingAn under light load, $\varepsilon=0.6$ under medium load and $\varepsilon=0.2$ under heavy load according to the $\varepsilon$ selection hint in Section \ref{ch:epsilonselection}.

Figure \ref{fig:performance-load} shows the comparison results.
Without the awareness of cluster heterogeneity, the job performance in both Flutter and Iridium keep away from the expectation.
As a whole, Dolly and Mantri have adept load case apiece and PingAn works the best on all load condition. PingAn reduces the average job flowtime by 52.9$\%$, 61.9$\%$ and 13.5$\%$ than the best baseline under light, medium and heavy load respectively.
More details are illustrated in Figure \ref{fig:flowtime_comp_under_different_load}.
\begin{figure}[!tp]
\setlength{\abovecaptionskip}{2pt}
\setlength{\belowcaptionskip}{-10pt}
  \centering
  \subfigure[CDF of flowtime under light load ($\lambda=0.02$)]{
    \label{fig:c1}
    \includegraphics[width=0.45\linewidth]{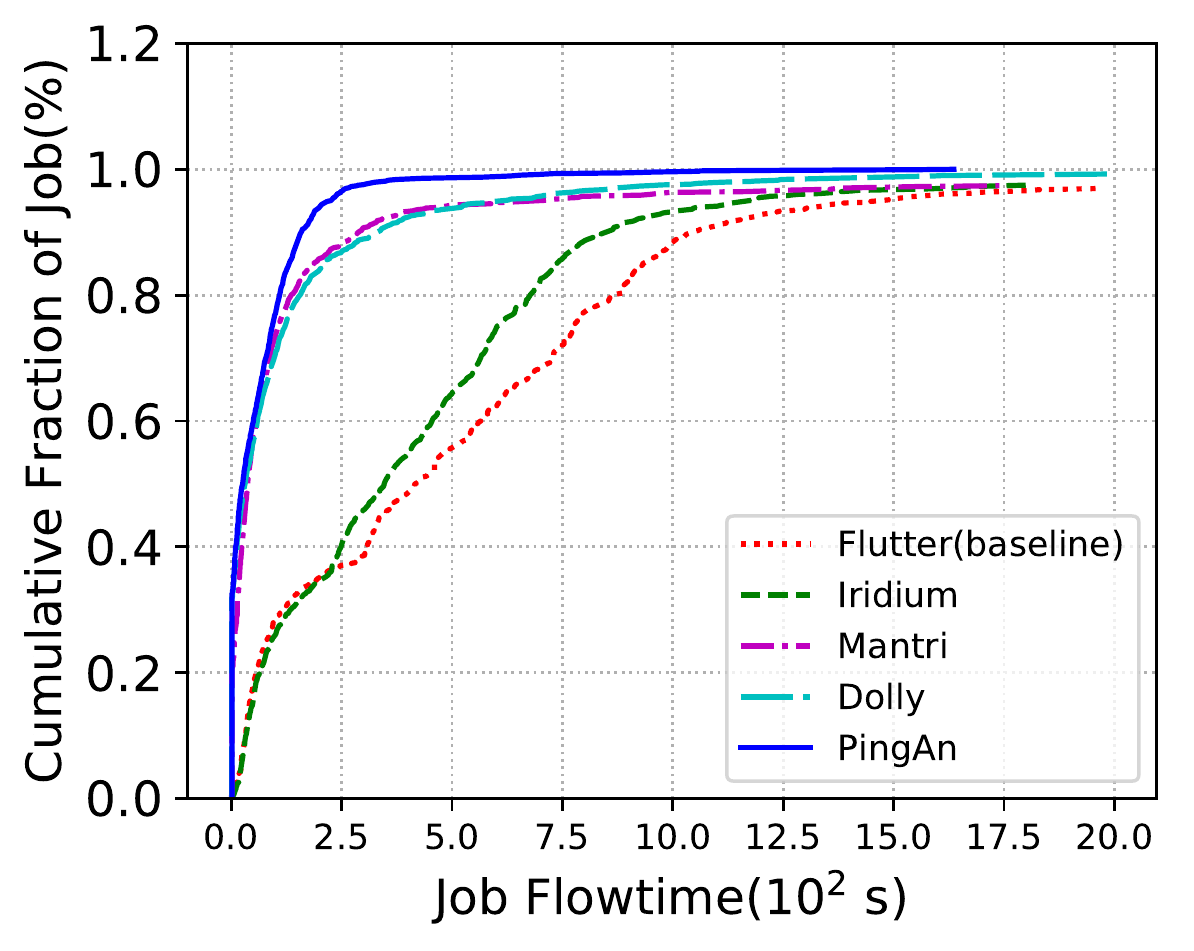}}
  \hspace{0.01\linewidth}
  \subfigure[CDF of flowtime reduction under light load ($\lambda=0.02$)]{
    \label{fig:r1}
    \includegraphics[width=0.45\linewidth]{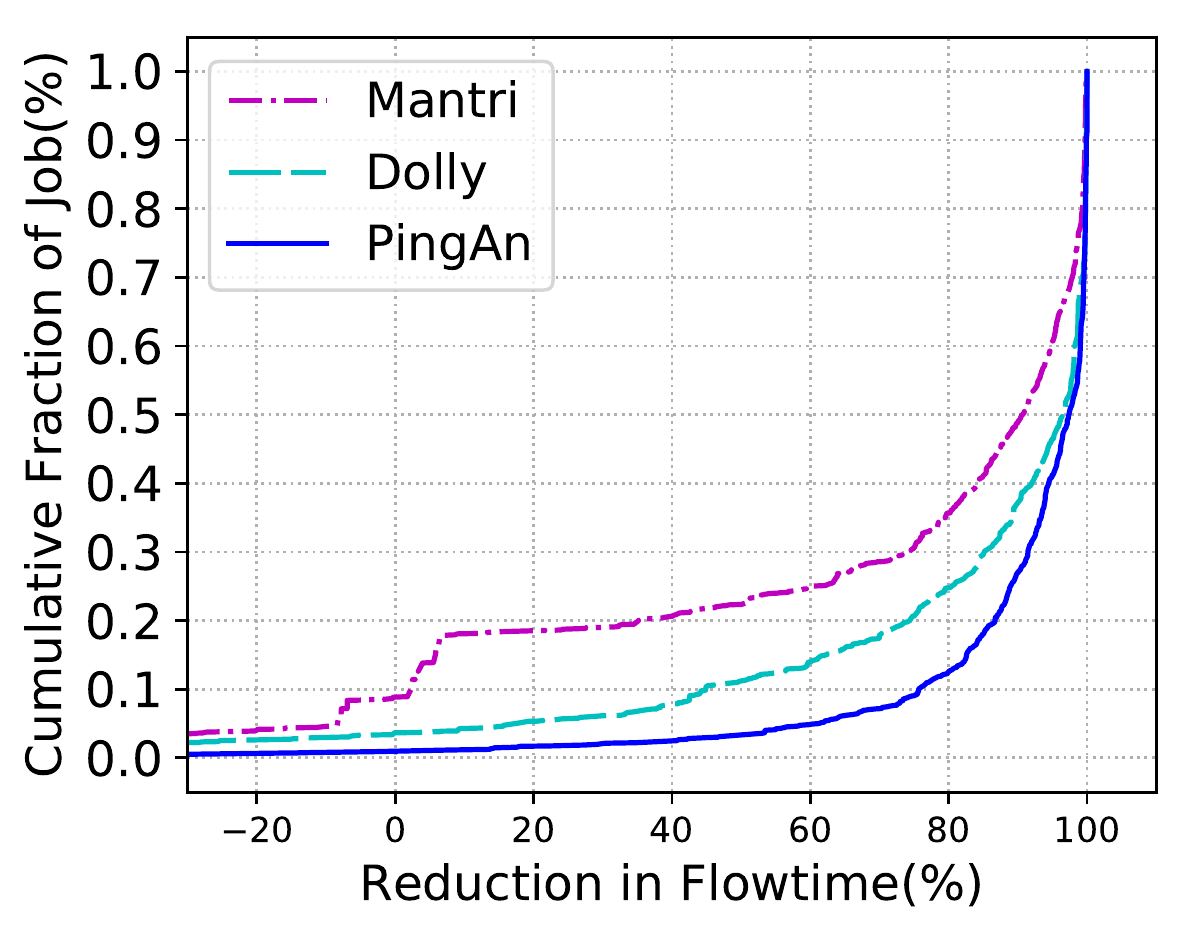}}
  \vfill
  \subfigure[CDF of flowtime under medium load ($\lambda=0.07$)]{
    \label{fig:c2}
    \includegraphics[width=0.45\linewidth]{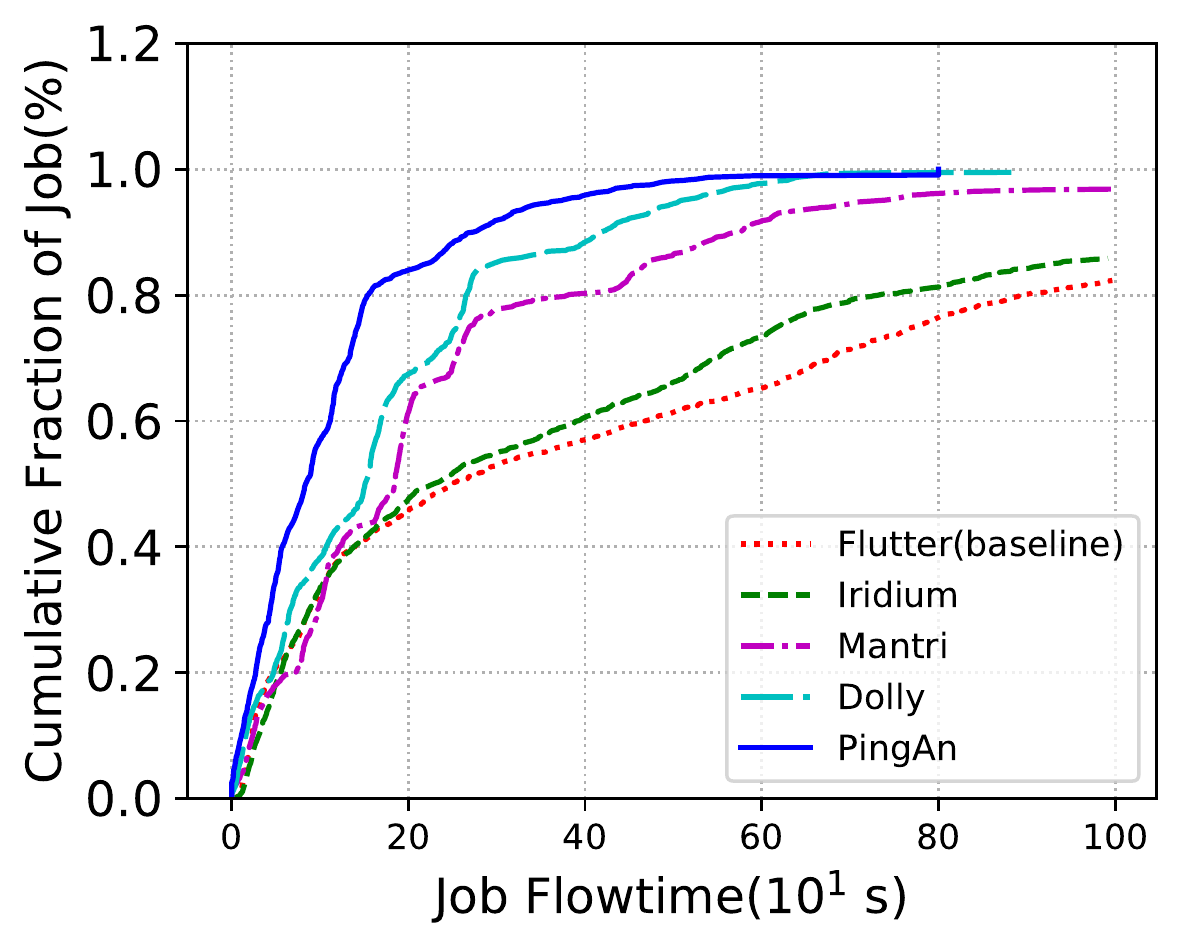}}
  \hspace{0.01\linewidth}
  \subfigure[CDF of flowtime reduction under medium load ($\lambda=0.07$)]{
    \label{fig:r2}
    \includegraphics[width=0.45\linewidth]{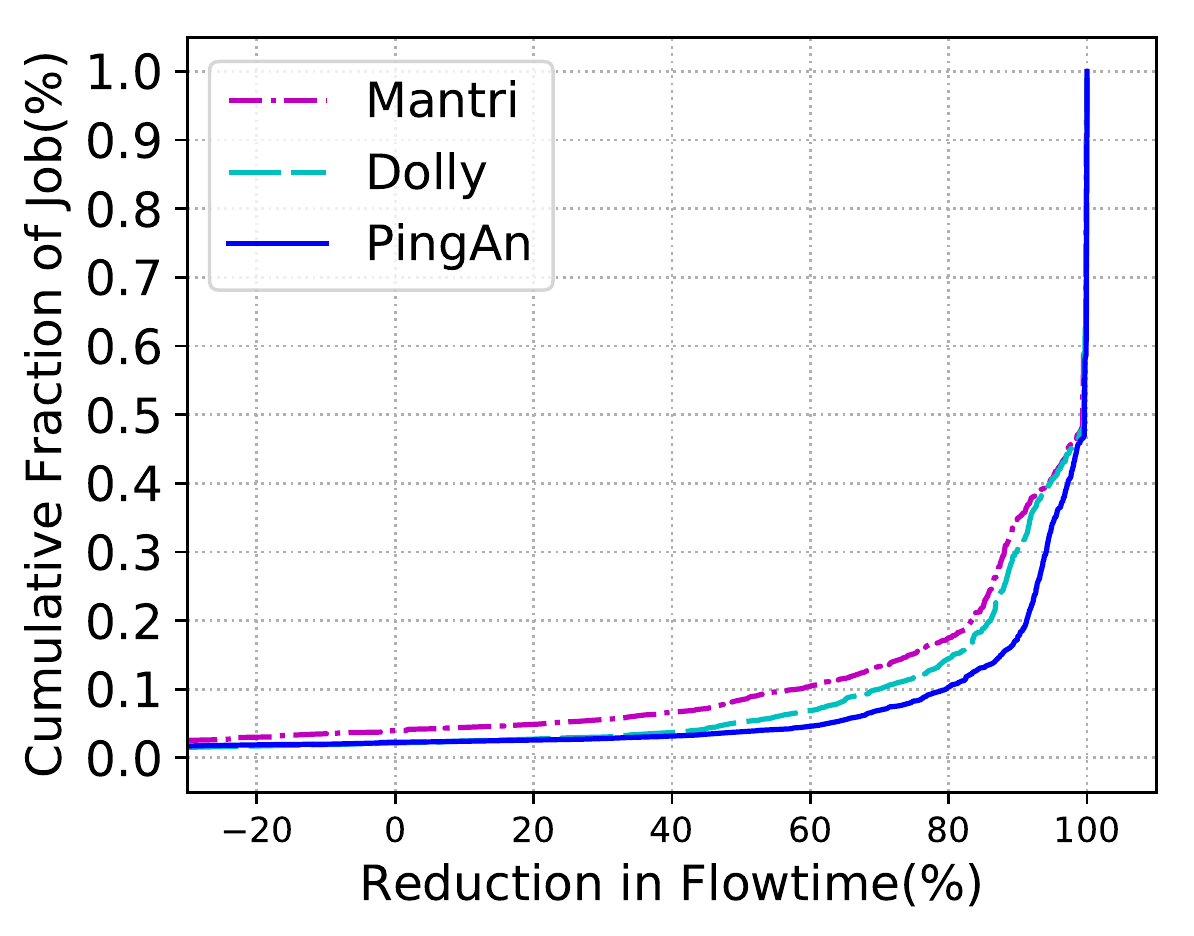}}
  \vfill
  \subfigure[CDF of flowtime under heavy load ($\lambda=0.15$)]{
    \label{fig:c3}
    \includegraphics[width=0.45\linewidth]{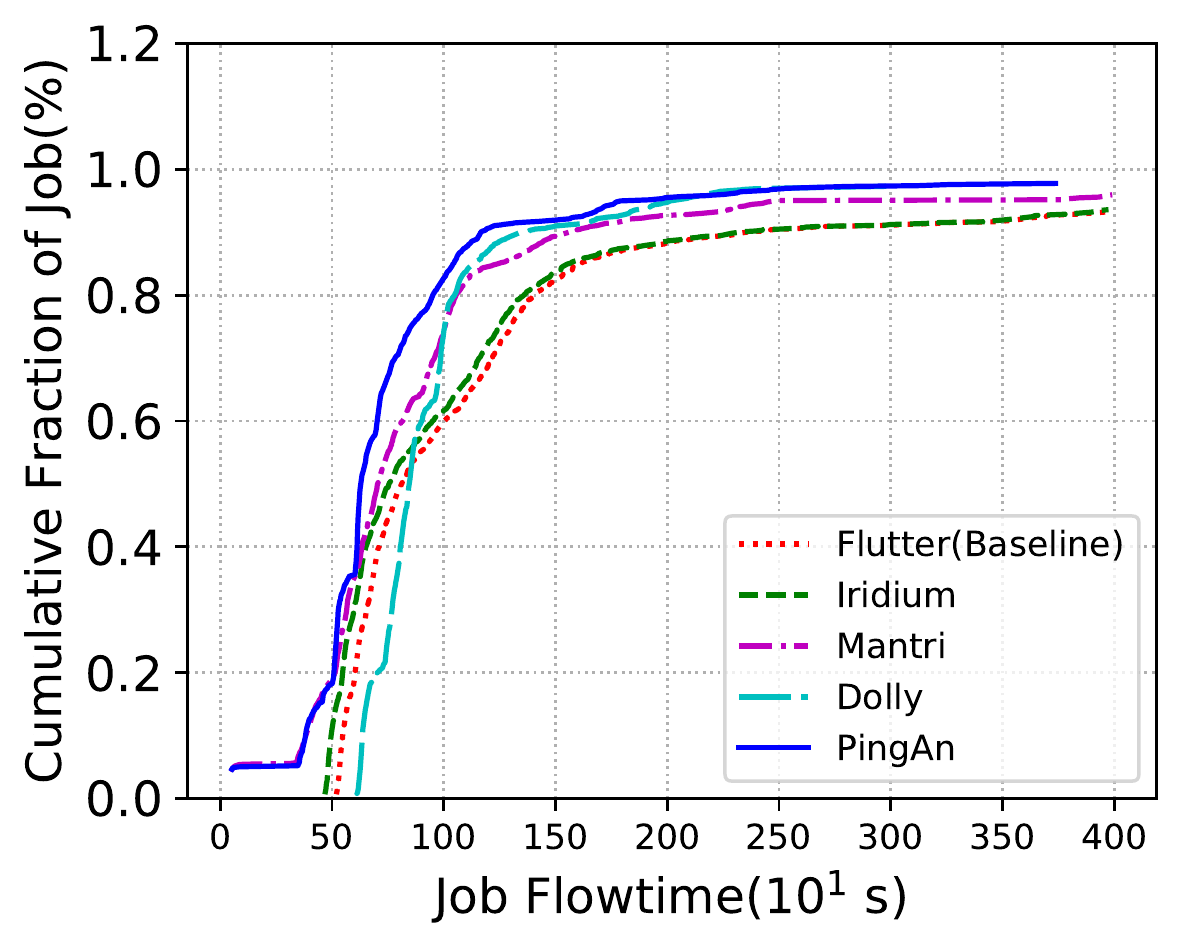}}
  \hspace{0.01\linewidth}
  \subfigure[CDF of flowtime reduction under heavy load ($\lambda=0.15$)]{
    \label{fig:r3}
    \includegraphics[width=0.45\linewidth]{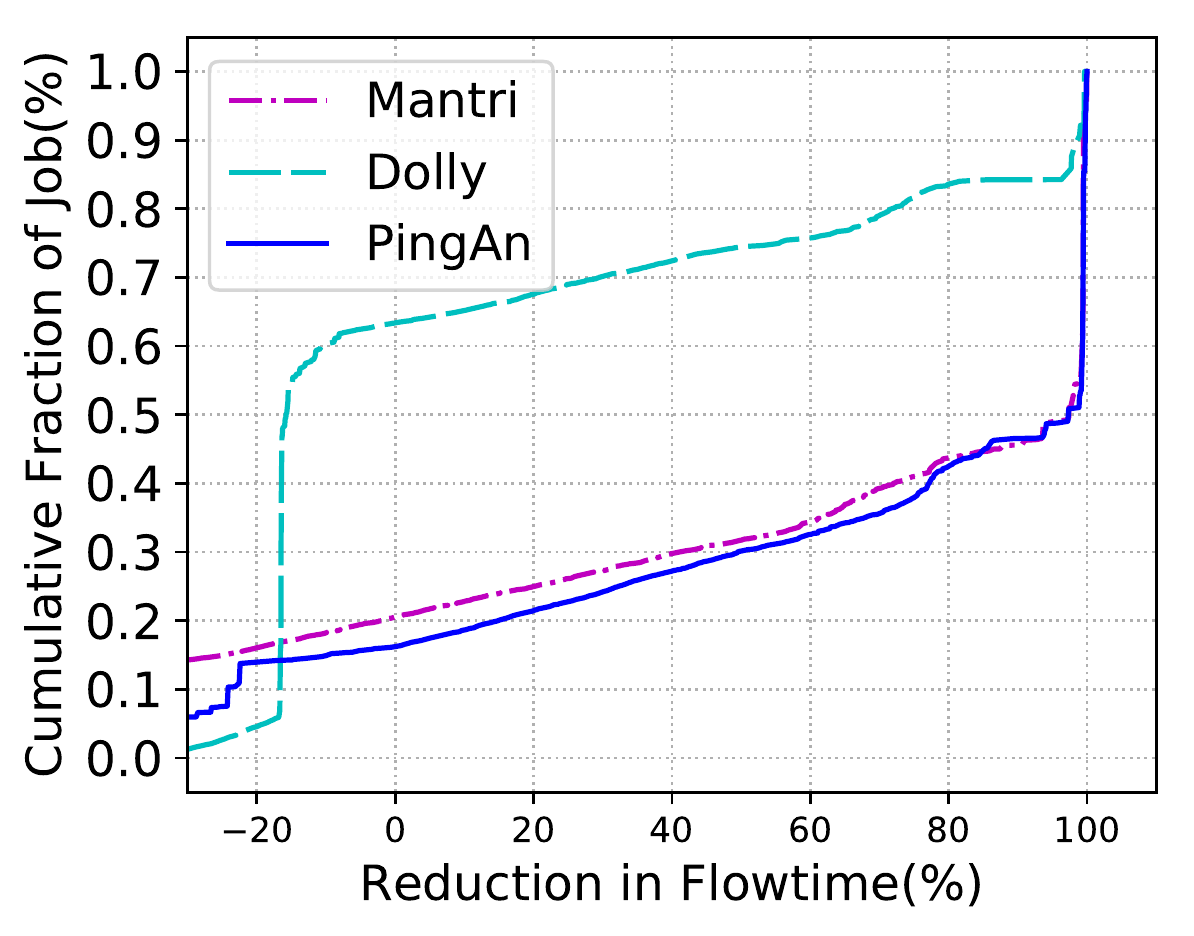}}
  \caption{The details of job performance in PingAn and baselines in different load condition. (a)(c)(e) depicts the CDF of job flowtimes for PingAn and each baselines under three load; (b)(d)(f) depicts the CDF of jobs flowtime reduction ratio to the Flutter for PingAn, Mantri and Dolly under three load.}
  \label{fig:flowtime_comp_under_different_load}
\end{figure}

%


In lightly loaded case, Mantri and Dolly are well-matched.
The sufficient idle slots admits Dolly making enough clones to avoid stragglers, and co-existed task number is little enough for Mantri to detect the straggler and copy it quickly.
As shown in Figure \ref{fig:c1}, the fraction of jobs finishing within 100 seconds is 70.5$\%$ in Dolly and 73.7$\%$ in Mantri.
However, PingAn performs better that 76.9$\%$ jobs finishes within 100 seconds under light load.
PingAn makes copies toward execution efficiency and reliability directly and improves the performance the most.
Seen in another light, as shown in the Figure \ref{fig:r1}, more than 70$\%$ jobs in PingAn has at least 91.4$\%$ reduction in flowtime. In comparison, Mantri and Dolly has only 74.3$\%$ and 85.2$\%$ at their $30^{th}$ reduction ratio.

In moderately loaded case, Mantri is insensitive to promote tasks with a relatively moderate latency which are the majority under the medium load and sometimes the restart copy is ineffective due to costly WAN transfer.
Dolly improves the job efficiency via aggressively making copies and works better than Mantri as shown in Figure \ref{fig:c2} and \ref{fig:r2}. In Figure \ref{fig:c2}, Dolly has 67.5$\%$ jobs finishing within 200 seconds and Mantri only has 61.34$\%$. In Figure \ref{fig:r2}, more than 70$\%$ jobs in Dolly has at least 89.7$\%$ flowtime reduction while 88.1$\%$ in Mantri.
PingAn further precedes Dolly since it insures more efficient and reliable copies to tasks suffering higher risk instead of aggressively cloning as Dolly. Thus, 84.03$\%$ jobs in PingAn can finish within 200 seconds and the value at $30^{th}$ reduction ratio is 94.13$\%$.

In the heavy load case, Mantri effectively restrains the overlong tasks.
Figure \ref{fig:c3} illustrates that 59$\%$ jobs in Mantri finishes within 800 seconds under heavy load. The result is better than the 37.6$\%$ in Dolly but worse than the 71.0$\%$ in PingAn.
PingAn speeds up the job flowtimes the most even under the heavy load via optimizing copy effect as depicted in Figure \ref{fig:r3}.
Concretely speaking, PingAn improves the job flowtimes by 49.6$\%$ at $30^{th}$. In contrast, Mantri is 41.1$\%$ and Dolly even makes 63.4$\%$ jobs flowtime be longer due to its reckless preemption.
\subsection{Impact of Insurance Principle}\label{ch:principleExp}
In this subsection, we verify the effect of efficiency-first reliability-aware principle via comparing the job performance after exchanging the insuring principle in the first two round.
\begin{figure}[!htp]
\setlength{\abovecaptionskip}{-2pt}
\setlength{\belowcaptionskip}{-5pt}
  \centering
  \subfigure[Effect comparison of candidate principles in the first two round of PingAn]{
    \label{fig:eff-reli}
    \includegraphics[width=4.0cm,height=3.3cm]{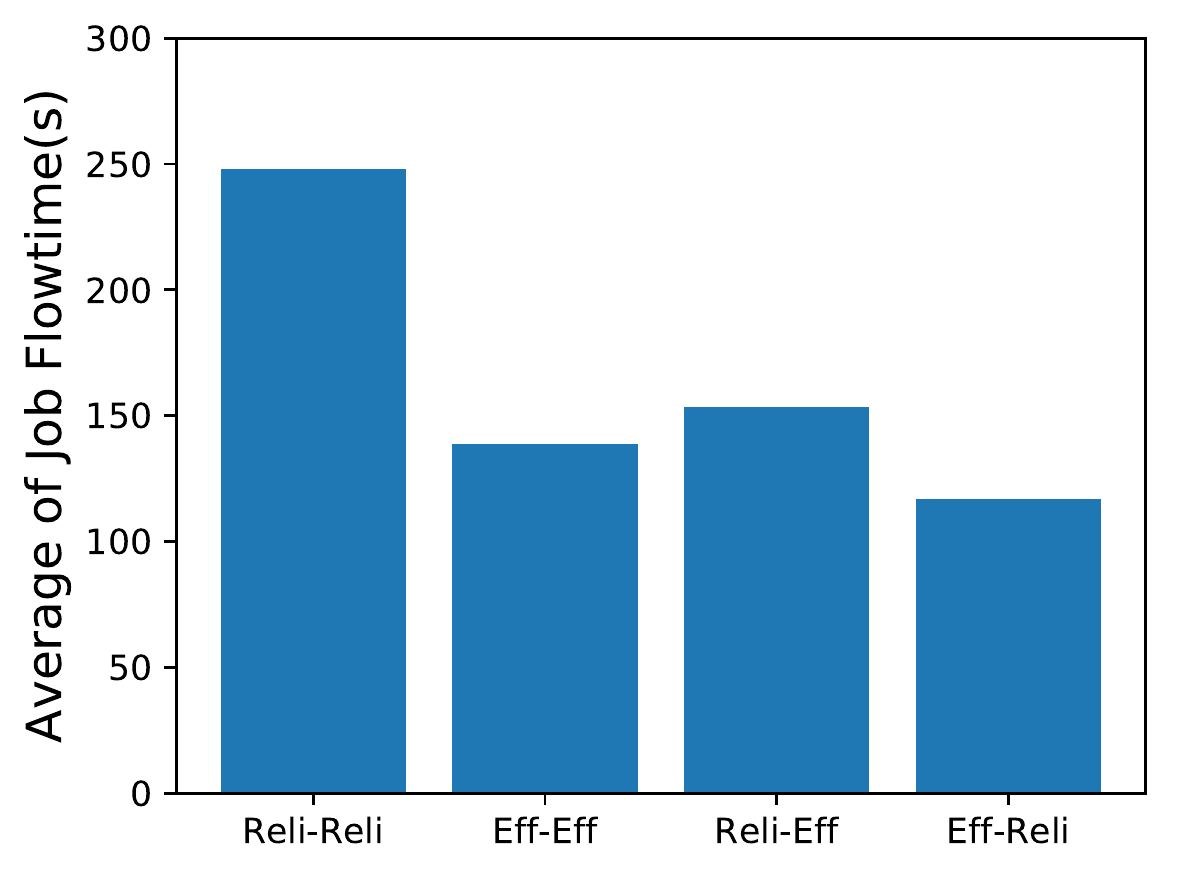}}
 \subfigure[Effect comparison of Job-Greedy and Efficient-first in PingAn]{
    \label{fig:efa-jga}
    \includegraphics[width=4.0cm]{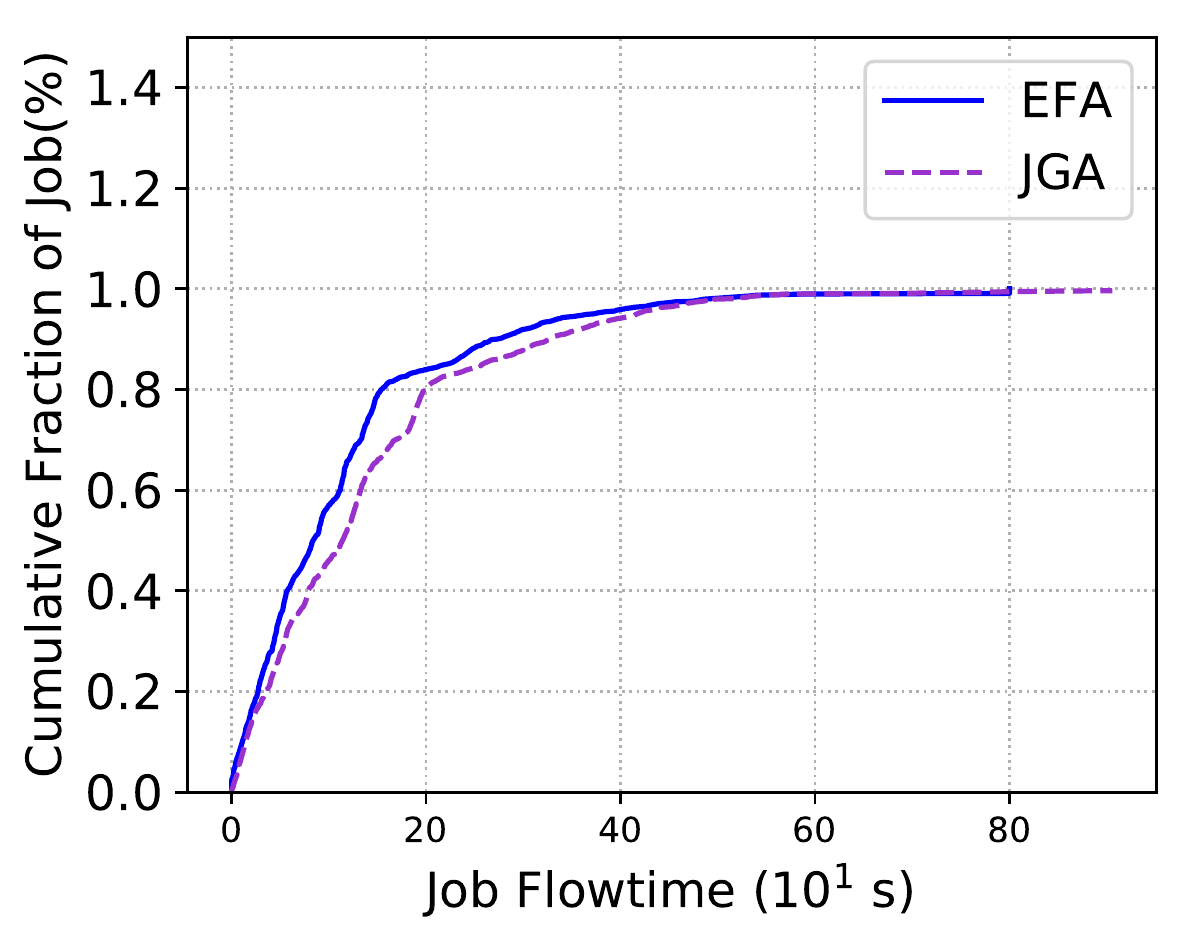}}
  \caption{The effect of efficient-first reliability-aware principle when $\varepsilon=0.6$ in PingAn and load parameter $\lambda=0.07$.}
  \label{fig:efficient-first}
\end{figure}
\vspace{-0.2cm}
The original insuring scheme in PingAn is denoted as $\text{Eff-Reli}$.
For the others, the one that uses reliability-aware in the first round and efficient-first in the second round is denoted as $\text{Reli-Eff}$, the one that uses efficient-first in both two rounds is denoted as $\text{Eff-Eff}$ and the one that uses the reliability-aware in both two rounds is denoted as $\text{Reli-Reli}$.
Figure \ref{fig:eff-reli} shows that $\text{Eff-Reli}$ performs better than the candidates violating the efficiency-first principle and its average job flowtime is less than $\text{Reli-Eff}$ and $\text{Reli-Reli}$ by 18.5$\%$ and 52.8$\%$ respectively.
Although the efficiency is priority to the reliability, however, the reliability is also worthy to consider since $\text{Eff-Eff}$ without the awareness of reliability is worse than $\text{Eff-Reli}$ by 4$\%$ in the average job flowtime.

The efficiency-first principle also works on resource allocation among multiple jobs in the first insuring round.
Figure \ref{fig:efa-jga} indicates that EFA works better than JGA.
Specially, the average flowtime of EFA is less than JGA by 39.4$\%$.
\subsection{Hint on $\varepsilon$ Selection}\label{ch:epsilonselection}

\begin{figure}[htp]
\setlength{\abovecaptionskip}{0pt}
\setlength{\belowcaptionskip}{-5pt}
  \centering
  \subfigure[The average job flowtime variance of $\lambda \in \text{[0.02,0.15]}$ under different $\varepsilon$]{
    \label{fig:e1}
    \includegraphics[width=4.0cm]{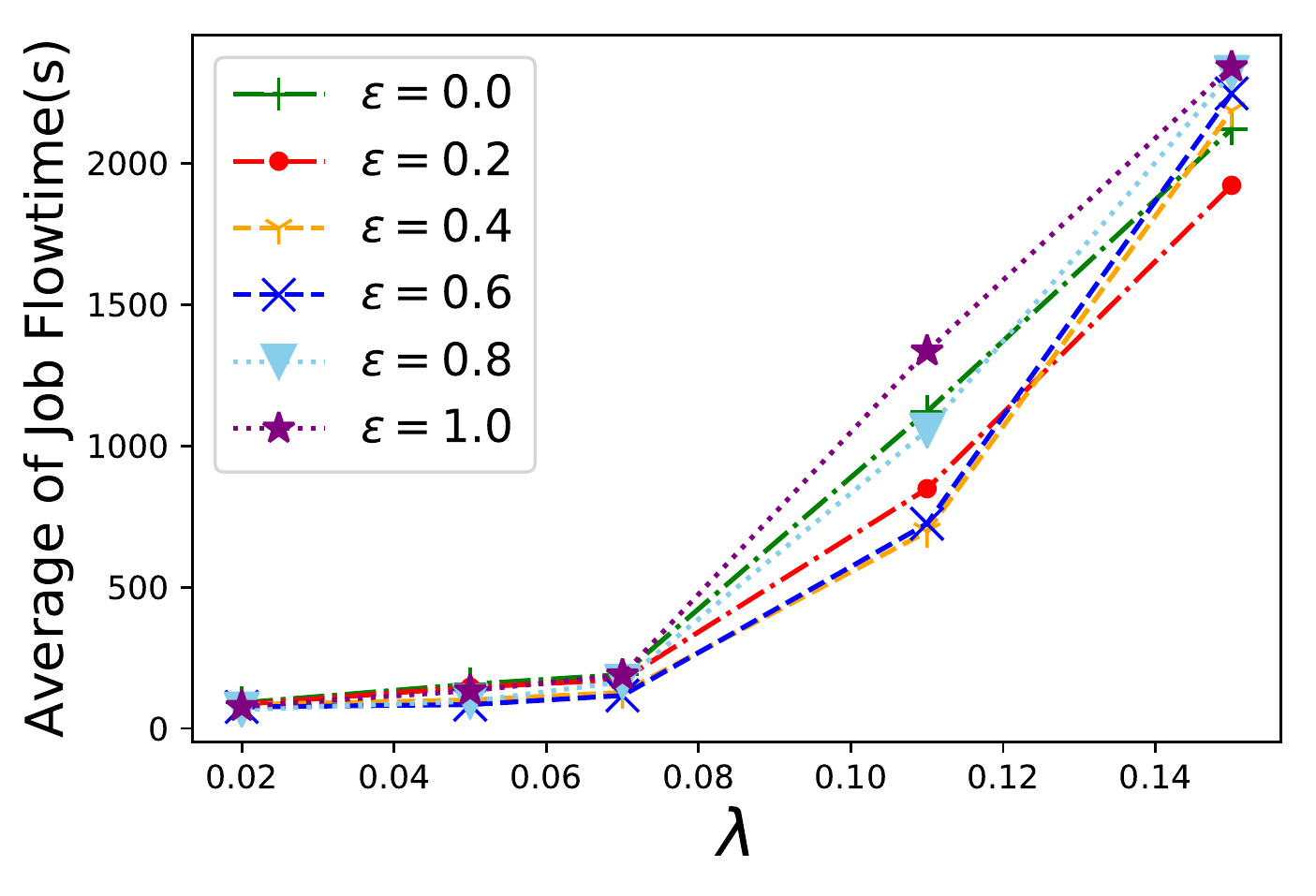}}
  \subfigure[The average job flowtime variance of $\lambda \in \text{[0.02,0.08]}$ under different $\varepsilon$]{
    \label{fig:e2}
    \includegraphics[width=4.0cm]{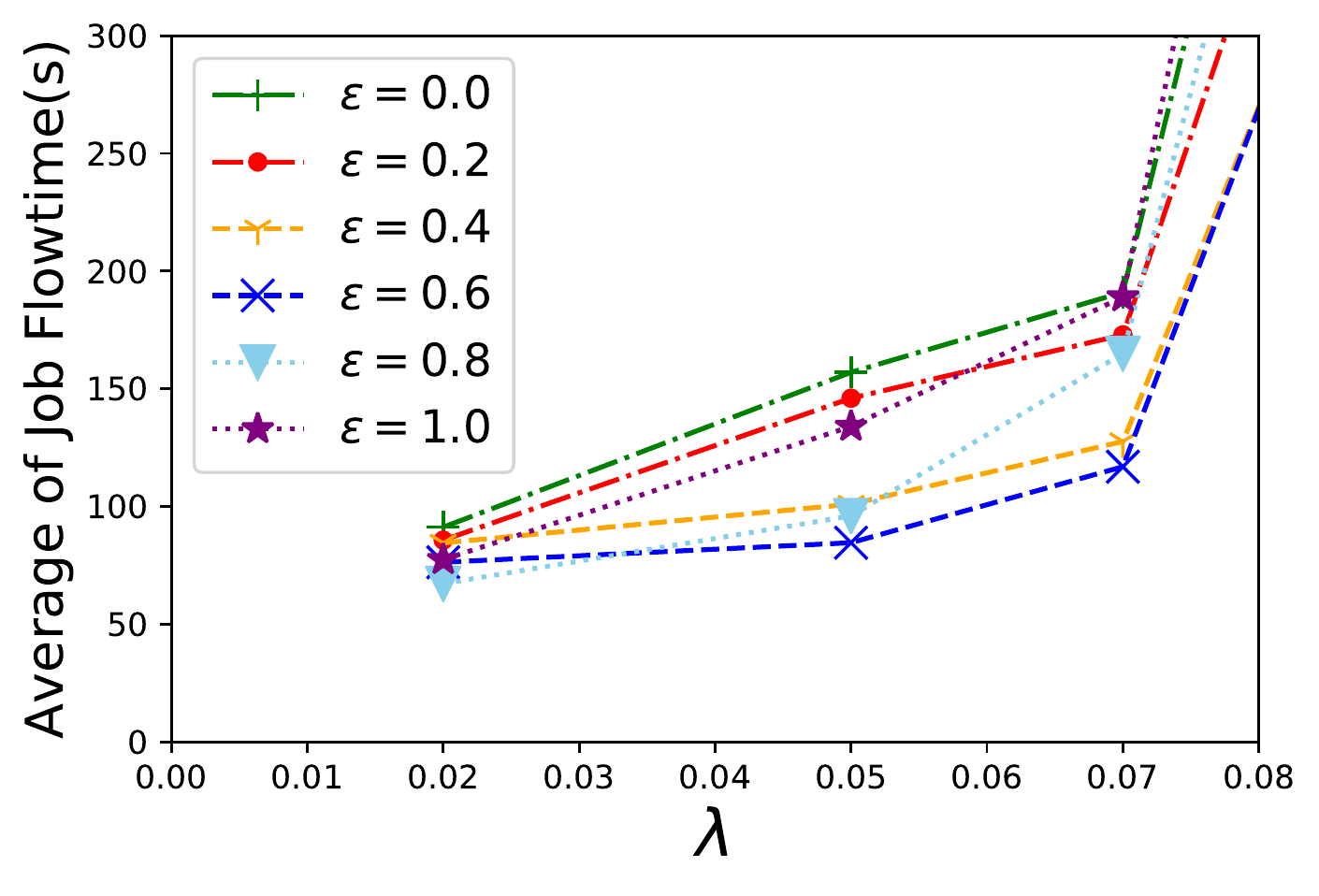}}
  \caption{The relation between $\varepsilon$ and $\lambda$.}
  \label{fig:epsilon_lambda}
\end{figure}
The adjustable performance parameter $\varepsilon$ in PingAn need to tune to some value that best fits the system load condition.
The $\varepsilon$ trades off the overall performance improvement between the acceleration of jobs with smaller workloads and the completion of jobs with larger workloads.
We adjust the Poisson parameters $\lambda$ to control the jobs arriving rate and evaluate the impact of $\varepsilon$ on the average job flowtimes in each load condition.
The evaluation results is depicted in Figure \ref{fig:epsilon_lambda}.

Under five workload arriving rate (let $\lambda$ to be 0.02, 0.05, 0.07, 0.11 and 0.15 respectively), the workload's favourite $\varepsilon$ value is 0.8, 0.6, 0.6, 0.4 and 0.2 respectively.
It can be a hint to select $\varepsilon$ for a system.
For a lightly loaded case, the selection of $\lambda$ is partial to be a moderate or little bigger value to fully utilize idle resources.
For a heavily loaded case, the value prefers to be closer to 0.2 to strive more efficiency for the small jobs arriving at the system.
\section{Conclusion}
In this paper, we focus on an online geo-distributed job flowtimes optimization problem in a cloud-edge system.
To address the unstable and unreliable execution in edges, we propose PingAn insuring algorithm to speed up jobs via inter-cluster task copying and provide a bounded competitive ratio.
PingAn excavates the insuring revenue better on account of the awareness of cluster heterogeneity and costly inter-cluster data fetch on copy execution.
Both of our system implementation and extensive simulation results demonstrate that under any load condition, PingAn can drastically improve the geo-distributed job performance and surpass the best cluster-scale speculation mechanisms by at least 14$\%$.

\bibliographystyle{ACM-Reference-Format}
\bibliography{sample-bibliography}

\appendix
\section{}
\subsection{Proof of Proposition 1}\label{ch:appendix1}
For convenience of analysis, we assumes that the distribution of data processing speed inside a cluster as well as the distribution of data transfer speed inter a cluster-pair can be fit to a continuous distribution.
~\cite{schad2010runtime} supports the assumption. They conducted a performance analysis spanning multiple Amazon EC2 clusters and found several of the performance measurements of VMs - particularly network bandwidth - to be normally distributed.
\begin{proof}
After $n$ insuring rounds in PingAn, The task ${\xi}_l^i$ execution rate $r_{l}^{i}(n)=\mathbb{E}\left [ \max\left \{ V_1,V_2,\cdots,V_n \right \} \right ]$.
The execution rate of the copy $V_x$ follows a distribution, i.e., $V_x\sim Q_x(v)=Pr(V_x<v)$ and let $q_x(v)=Q_x^{'}(v)$.

Let $V_n^r=\max\left \{ V_1,V_2,\cdots,V_n \right \}$ and define $Q_n^r(v)$ as the cumulative distribution function of $V_n^r$.
We have
\begin{sequation}
Q_n^r(v)=\prod\limits_{x=1}^{n}Q_x(v)
\end{sequation}
and further deduce its derivation that
\begin{sequation}\label{eq:q_n^r}
q_n^r(v)={Q_n^r}^'(v)=\sum\limits_{x=1}^{n}{( q_x(v)\cdot \prod\limits_{^{j=1:n;}_{\  j\neq x}}Q_j(v))}
\end{sequation}
In the first place, we prove that
\begin{sequation}\label{eq:first}
(n+1)r_{l}^{i}(n)\geq n\cdot r_{l}^{i}(n+1)
\end{sequation}
when $n\geq 1$.
We expand the left side of Eq. (\ref{eq:first}) as shown in Eq. (\ref{eq:leftside}).
The second equality in Eq. (\ref{eq:leftside}) follows the definition of expectation.
\begin{sequation}\label{eq:leftside}
\begin{split}
(n+1)r_{l}^{i}(n)=(n+1)\mathbb{E}\left [ \max\left \{ V_1,V_2,\cdots,V_n \right \} \right ]\qquad \\
=\!(n\!+\!1)\!\!\int\!\! v\cdot q_{n}^r(v)dv\!=\!n\!\!\int\!\! v\cdot q_n^r(v)dv \!+ \!\!\int\!\! v\cdot q_n^r(v)dv
\end{split}
\end{sequation}
and from the right side, we have
\begin{sequation}\label{eq:rightside}
\begin{split}
n\!\cdot\! r_{l}^{i}(n\!+\!1)\!=\!n\!\cdot \!\mathbb{E}\!\left [ \max\!\left \{ V_1,\!V_2,\!\cdots \!,\!V_{n+1}\! \right \} \right ]\!=\!n\!\!\int \!\!v\!\cdot\! q_{n+1}^r(v)dv \\
=\!n\!\!\int \!\!v\!\cdot \!q_{n}^r(v)Q_{n+1}(v)dv\!+\!n\!\!\int \!\!v\cdot q_{n+1}(v)\prod\limits_{j=1}^n Q_j(v)dv
\end{split}
\end{sequation}
The third equality in the Eq. (\ref{eq:rightside}) applies the definition of $q_{n+1}^r(v)$ in Eq. (\ref{eq:q_n^r}).
Obviously, the first term in the last formula of Eq. (\ref{eq:leftside}) is greater than the first term in the last formula of Eq. (\ref{eq:rightside}) because that $Q_{n+1}(v)\leq 1$.
Consequently, we only need to prove that the remainder of Eq. (\ref{eq:leftside}) and Eq. (\ref{eq:rightside}) satisfies the following inequality.
\begin{sequation}\label{eq:second}
\int v\cdot q_n^r(v)dv \geq n\cdot \int v\cdot q_{n+1}(v)\cdot \prod\limits_{j=1}^n Q_j(v)dv
\end{sequation}
To this end, we unfold the left side in Eq. (\ref{eq:second}) based on the definition of $q_n^r(v)$ and there are
\begin{sequation}\label{eq:leftunfold}
\int v\cdot q_n^r(v)dv = \sum\limits_{x=1}^{n}\int v\cdot q_{x}(v)\prod\limits_{j\neq x}Q_{j}(v)dv
\end{sequation}
Recalling that PingAn greedily insure the best copy for a task in each round. Thus, we have
\begin{sequation}
\mathbb{E}[V_x] \geq \mathbb{E}[V_{n+1}] \quad (x < n+1)
\end{sequation}
Apparently, it follows that
\begin{sequation*}
\begin{split}
\sum\limits_{x=1}^n \mathbb{E}[V_x] &\geq  n\cdot \mathbb{E}[V_{n+1}] \\
\Rightarrow \sum\limits_{x=1}^n \int v\cdot q_{x}(v)dv &\geq n\cdot \int v\cdot q_{n+1}(v)dv \\
\Rightarrow \sum\limits_{x=1}^n \int v\cdot q_{x}(v)\prod\limits_{j\neq x}Q_{j}(v)dv &\geq \sum\limits_{x=1}^n \int v\cdot q_{n+1}(v)\prod\limits_{j\neq x}Q_{j}(v)dv
\end{split}
\end{sequation*}
\begin{sequation}\label{eq:third}
\Rightarrow \sum\limits_{x=1}^n \int v\cdot q_{x}(v)\prod\limits_{j\neq x}Q_{j}(v)dv \geq n\int v q_{n+1}(v)\prod\limits_{j=1}^n Q_{j}(v)dv
\end{sequation}
The fourth inequality in the above follows the factor that $Q_{x}(v)\leq 1$.
Substituting Eq. (\ref{eq:leftunfold}) into Eq. (\ref{eq:third}), we conclude the Eq. (\ref{eq:second}) and further prove the Eq. (\ref{eq:first}).
Based on the Eq. (\ref{eq:first}), for any integer $b\geq a > 0$, we have
\begin{sequation*}
\frac{r_{l}^i(a)}{a} \geq \frac{r_{l}^i(a+1)}{a+1}\geq \cdots \geq  \frac{r_{l}^i(b)}{b}
\end{sequation*}
The proof completes.
\end{proof}
\subsection{Proof of Theorem 3}\label{ch:appendix2}
The potential function we defined in our analysis is extended from \cite{im2016competitively} and \cite{Xu2015Task}.
We assumes that the rate function $s_{l}^{i}(x)$ of the optimal adversary in our potential function analysis is a concave and strictly increasing function of $x$, which is a model generally adopted in many studies about online parallel scheduling problem \cite{Xu2015Task, im2016competitively,edmonds2009scalably,fox2013energy}.
The Proposition \ref{pro:second} is proved in \cite{Xu2015Task}, thus we used it directly in the following theorem proof.
\begin{proposition}\label{pro:second}
Consider any continuous and concave function $f$ : $\mathbb{R}^+\rightarrow \mathbb{R}^+$ with $f(0)\geq 0$. Then for any $b\geq a > 0$, we have $\frac{f(a)}{a}\geq \frac{f(b)}{b}$.
\end{proposition}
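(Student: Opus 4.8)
The plan is to use the definition of concavity directly, writing the smaller argument as a convex combination of $b$ and $0$. Since $b \geq a > 0$, set $\lambda = a/b$, so that $\lambda \in (0,1]$ and $a = \lambda b + (1-\lambda)\cdot 0$, i.e.\ $a$ lies between the two points $0$ and $b$ at which we will evaluate $f$. The reason for choosing this particular combination (rather than trying to express $b$ via $a$) is that it puts $f$'s argument on the ``inside'' of the interval, which is exactly where concavity gives a useful lower bound.

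The key steps, in order, are as follows. First, apply concavity of $f$ on $\mathbb{R}^+$ to the combination above to obtain $f(a) = f\!\left(\lambda b + (1-\lambda)\cdot 0\right) \geq \lambda f(b) + (1-\lambda) f(0)$. Second, discard the nonnegative term: since $1-\lambda \geq 0$ and $f(0) \geq 0$ by hypothesis, we have $(1-\lambda) f(0) \geq 0$, hence $f(a) \geq \lambda f(b) = (a/b)\, f(b)$. Third, divide both sides by $a$, which is permitted because $a > 0$; this yields $f(a)/a \geq f(b)/b$, the desired inequality. Iterating or specializing is unnecessary here — the bound holds for the full range $b \geq a > 0$ in one shot.

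I do not anticipate a genuine obstacle: the argument is a single application of the concavity inequality together with the sign condition $f(0)\ge 0$, and continuity is not even needed for this chain of implications (it is retained in the statement only for uniformity with how $f$ is used elsewhere). The one point deserving a remark is the degenerate case $a=b$, where the claim is an equality and nothing is to prove, and the role of the hypothesis $a>0$, which is precisely what makes the final division well defined. Thus the ``hard part,'' such as it is, amounts only to recognizing the right convex combination to feed into the definition of concavity.
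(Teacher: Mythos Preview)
Your argument is correct and is the standard one-line proof: writing $a=\lambda b+(1-\lambda)\cdot 0$ with $\lambda=a/b\in(0,1]$ and applying concavity together with $f(0)\ge 0$ gives $f(a)\ge (a/b)f(b)$, hence $f(a)/a\ge f(b)/b$. The paper itself does not supply a proof of this proposition; it simply cites \cite{Xu2015Task} and uses the result directly, so there is nothing to compare against beyond noting that your self-contained derivation is exactly what one would expect and requires no continuity assumption.
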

\begin{proof}
Let $z_{l}^{i}(t)=\max({d_{l}^{iP}}-{d_{l}^{iO}},0)$ where ${d_{l}^{iP}}$ and ${d_{l}^{iO}}$ represent the remaining unprocessed workload for task $\xi_{l}^{i}$ in Job $J_i$ at time $t$ under the optimal scheduling policy and PingAn insurance algorithm respectively.

The potential function for a single task is defined as follow
\begin{sequation*}
\varphi_{l}^{i}(t)=\frac{z_{l}^{i}(t)}{r_{l}^{i}(M_{\mathcal{K}}/\varepsilon N(t))}
\end{sequation*}
where $M_{\mathcal{K}}=\sum\limits_{k\in \mathcal{K}}M_{k}$.

The overall \textbf{Potential Function} for all the jobs arriving at the system is defined as
\begin{sequation*}
\Psi(t)=\frac{1}{\varepsilon^2}\sum\limits_{J_i \in \eta^{P}(t)}\sum\limits_{\xi_{l}^{i}\in J_{i}}\varphi_{l}^{i}(t)
\end{sequation*}
where $\eta^{P}(t)$ denotes the set of alive jobs in PingAn at time $t$. Further let $\eta^{O}(t)$ and $\eta^{O}_{l}(t)$ indicates the jobs and tasks that have not completed at time $t$ in the optimal scheduling.

The potential function is differentiable, and we have
\begin{sequation}\label{eq:decomposetotalchange}
\mathbb{E}\left [ \frac{d\Psi(t)}{dt} \right ]=\frac{1}{\varepsilon^2}\sum\limits_{J_i\in \eta^{P}(t)}\sum\limits_{\xi_{l}^{i}\in J_{i}}\mathbb{E}\left [ \frac{d\varphi_{l}^{i}(t)}{dt} \right ]
\end{sequation}

Obviously, it holds that $\Psi(0)=\Psi(\infty )=0$ and the value of the potential function does not increase when a job arrives or completes in PingAn and the optimal adversary.
Thus, we analyze the change of $\Psi(t)$ at the time $t$ that no job arrives or completes.
Some notations shown in Table \ref{tab:notation} are used in the following $\Psi(t)$ change analysis.
\begin{table}[!htbp]
\caption{The Notation in Approximation Analysis}
\small
\centering
\begin{tabular}{p{1.2cm}|p{7cm}<{\centering}}
\hline
Notations&Corresponding meaning \\
\hline
\end{tabular}

\begin{tabular}{p{1.2cm}<{\centering}|p{7cm}}
\hline
$f_{i}^{C}$&$C=\left\{ P \text{ for PingAn; }OPT \text{ for optimal} \right\}$ \\
&The completion time of job $J_{i}$ under $C$ algorithm \\
\hline
$C_{i}(t)$&$=\min(f_{i}^{C},t)-a_{i}$ \\
&The accumulated flow time of job $J_{i}$ at time $t$ under $C$ algorithm \\
\hline
$C(t)$& $=\sum_iC_i(t)$ \\
& The accumulated sum of job flowtimes at time $t$ under $C$ algorithm \\
\hline
$C_i$&$=C_i(f_i^C)=C_{i}(\infty)$ \\
& The flowtime of $J_i$ \\
\hline
$C$&$=\sum_{i}P_{i}(\infty )$ \\
& The sum of job flowtimes under $C$ algorthm\\
\hline
\end{tabular}
\label{tab:notation}
\end{table}
\begin{itemize}
\item
Calculating $\Delta^O(t)$, the changes in $\Psi(t)$ due to the optimal scheduling:

For a task $\xi_l^i$, the change made by the optimal scheduling is denoted as $\Delta_l^{iO}=\frac{\mathrm{d} \mathbb{E}[\varphi_{l}^{i}(t)]}{\mathrm{d} t}$.
Based on the definition of potential function, we expand $\Delta_l^{iO}$ and bound it in the equation below
\begin{sequation}\label{eq:task_delta_opt}
\Delta_{l}^{iO}\leq -\frac{\mathbb{E}[\frac{\mathrm{d}(d_l^{iO}(t)) }{\mathrm{d} t}]}{r_{l}^{i}(M_{\mathcal{K}}/\varepsilon N(t))}
\end{sequation}

Applying the definitions Eq. (\ref{c:c6}) and Eq. (\ref{c:c7}) in optimal scheduling with speed function $s_l^i(x)$, we have
\begin{sequation}\label{eq:changeinspeed}
\begin{split}
\mathbb{E}[e_l^{iO}]&=\mathbb{E}\left [ f_{l}^{iO}-st(\xi_{l}^{i}) \right ]=\mathbb{E}[{D_{l}^{i}}/{s_{l}^{j}(x_{l}^{j})}] \\
&=\mathbb{E}[ \int_{st(\xi_{l}^{i})}^{f_{l}^{i}}\mathrm{d}(d_{l}^{iO}(t)) /{s_{l}^{j}(x_{l}^{j})}]
\end{split}
\end{sequation}
According to Eq. (\ref{eq:changeinspeed}), the following formula is yielded
\begin{sequation}\label{eq:speedformula}
\mathbb{E}\left [ \frac{\mathrm{d}(d_{l}^{iO})}{\mathrm{d}t} \right ]={-s_{l}^{i}(x_{l}^{i})}
\end{sequation}
Let $u_{l}^{iO}$ be the number of slots assigned to task $\xi_{l}^{i}$ of job $J_i$ in the optimal scheduling.
Substituting Eq. (\ref{eq:speedformula}) into Eq. (\ref{eq:task_delta_opt}), we have
\begin{sequation}\label{formula2}
\Delta_{l}^{iO}\leq \frac{s_{l}^{i}(u_{l}^{iO})}{r_{l}^{i}(M_{\mathcal{K}}/\varepsilon N(t))}
\end{sequation}
Recalling that the expected rate of a task running on a slot need to be greater than $\frac{1}{1+\varepsilon}$ fraction of the global optimal rate of the task.
We denote the proportion as $\alpha > \frac{1}{1+\varepsilon}$ for logogram.
Based on the rule of lower limit rate, we have
\begin{sequation}\label{formula3}
r_{l}^{i}(M_{\mathcal{K}}/\varepsilon N(t))\geq {\alpha}V_{opt}\geq {\alpha}s_{l}^{i}(M_{\mathcal{K}}/\varepsilon N(t))
\end{sequation}
Substituting the Eq. (\ref{formula3}) into the Eq. (\ref{formula2}) yields that
\begin{sequation}\label{formula4}
\Delta_{l}^{iO}=\frac{s_{l}^{i}(u_{l}^{iO})}{r_{l}^{i}(M_{\mathcal{K}}/\varepsilon N(t))}\leq \frac{s_{l}^{i}(u_{l}^{iO})}{\alpha s_{l}^{i}(M_{\mathcal{K}}/\varepsilon N(t))}
\end{sequation}
Considering two cases, when $u_{l}^{iO}\leq M_{\mathcal{K}}/\varepsilon N(t)$, we have $\Delta_{l}^{iO}\leq 1/\alpha$ as the result of the monotonic property of $s_l^i(x)$ function; when $u_{l}^{iO}> M_{\mathcal{K}}/\varepsilon N(t)$, based on Proposition \ref{pro:second}, we have $\Delta_{l}^{iO}\leq \frac{u_{l}^{iO}}{\alpha M_{\mathcal{K}}/\varepsilon N(t)}\leq \frac{\varepsilon N(t)u_l^{iO}}{\alpha M}$.
In consequence, it holds that
\begin{sequation}
\Delta_{l}^{iO}\leq 1/\alpha+\frac{\varepsilon N(t)u_l^{iO}}{\alpha M_{\mathcal{K}}}
\end{sequation}
Based on Eq. (\ref{eq:decomposetotalchange}), it follows that
\begin{small}
\begin{align}
&\Delta^O(t)=\frac{1}{\varepsilon^2}\sum\limits_{J_i\in \eta^P(t)\cap \eta^O(t)}\sum\limits_{\xi_l^i\in J_i}\Delta_{l}^{iO}(t) \qquad \qquad \\
&\leq \frac{1}{\alpha \varepsilon^2}\sum\limits_{J_i\in\eta^O(t)}\sum\limits_{\xi_l^i \in J_i}1+\frac{N(t)}{\alpha\varepsilon M_{\mathcal{K}}}\sum\limits_{J_i\in\eta^P(t)}\sum\limits_{\xi_l^i \in J_i}u_l^{iO} \\
&\leq \frac{C}{\alpha \varepsilon^2}\sum\limits_{J_i\in\eta^O(t)}1+\frac{N(t)}{\alpha\varepsilon M_{\mathcal{K}}}M_{\mathcal{K}} \label{eq:optimallastbutone} \\
&\leq \frac{C}{\alpha \varepsilon^2}\!\!\sum\limits_{J_i\in\eta^O(t)}\!\!\!\!\mathbb{E}[\frac{\mathrm{d}{OPT}_i(t) }{\mathrm{d} t}]\!+\!\frac{1}{\alpha\varepsilon }\!\!\sum\limits_{J_i\in\eta^P(t)}\!\!\!\!\mathbb{E}[\frac{\mathrm{d}{P}_i(t) }{\mathrm{d} t}] \label{eq:optimallastbutone2} \\
&= \frac{C}{\alpha \varepsilon^2}\mathbb{E}[\frac{\mathrm{d}{OPT}(t) }{\mathrm{d} t}]+\frac{1}{\alpha\varepsilon }\mathbb{E}[\frac{\mathrm{d}{P}(t) }{\mathrm{d} t}] \label{eq:optimallastbutone3}
\end{align}
\end{small}
where $C$ introduced in Eq. (\ref{eq:optimallastbutone}) is the most copy numbers of tasks made in the optimal scheduling.
The second term in Eq. (\ref{eq:optimallastbutone}) is deduced following the slot number restriction.
Based on the flowtime definition, we derive the Eq. (\ref{eq:optimallastbutone2}) and the Eq. (\ref{eq:optimallastbutone3}).

At this point, we get the change bound caused by the optimal scheduling
\begin{sequation}\label{eq:optimalchange}
\Delta^{O}(t)\leq \frac{C}{\alpha \varepsilon^2}\cdot \mathbb{E}[\frac{\mathrm{d}{OPT}(t)}{\mathrm{d}t}]+\frac{1}{\alpha \varepsilon}\mathbb{E}[\frac{\mathrm{d}P(t)}{\mathrm{d}t}]
\end{sequation}
\item
Calculating $\Delta^P(t)$, the changes in $\Psi(t)$ due to PingAn insurance:

PingAn runs at speed of $1+\varepsilon$ faster. Let $u_{l}^{iP}$ be the number of slots assigned to task $\xi_{l}^{i}$ of job $J_i$ in PingAn at time $t$.
We expand $\Delta_l^{iP}$ based on the definitions of potential function and it holds that
\begin{sequation*}
\begin{split}
\Delta^P(t)=\frac{1+\varepsilon}{\varepsilon^2}\sum\limits_{J_i\in \eta^P(t)\cap \eta^O(t)}\sum\limits_{\xi_l^i\in J_i}\frac{\mathbb{E}[\frac{\mathrm{d}(d_l^{iP}(t)) }{\mathrm{d} t}]-\mathbb{E}[\frac{\mathrm{d}(d_l^{iO}(t)) }{\mathrm{d} t}]}{r_{l}^{i}(M_{\mathcal{K}}/\varepsilon N(t))} \\
\leq \frac{1+\varepsilon}{\varepsilon^2}\sum\limits_{J_i\in \eta^P(t)}\sum\limits_{\xi_l^i\in J_i \cap \xi_l^i\notin\eta_l^O(t)}\frac{\mathbb{E}[\frac{\mathrm{d}(d_l^{iP}(t)) }{\mathrm{d} t}]}{r_{l}^{i}(M_{\mathcal{K}}/\varepsilon N(t))} \qquad \qquad \
\end{split}
\end{sequation*}
\begin{sequation}\label{eq:midinequality}
=-\frac{1+\varepsilon}{\varepsilon^2}\!\!\!\!\sum\limits_{J_i\in \eta^P(t)}\sum\limits_{\xi_l^i\in J_i \cap \xi_l^i\notin\eta_l^O(t)}\!\!\frac{r_l^i(u_l^{iP})}{r_{l}^{i}(M_{\mathcal{K}}/\varepsilon N(t))}
\end{sequation}
The third equality is based on the Eq. (\ref{eq:speedformula}) with a replacement of rate $r_l^i(x)$ in PingAn.
According to the insuring policy in PingAn, we have $\sum_l u_l^{iP}\leq h_i(t)=\frac{M_{\mathcal{K}}}{\varepsilon N(t)}$ and $\sum\limits_{J_i\in\eta^P(t)}h_i(t)=M_{\mathcal{K}}$, further based on Proposition \ref{proposition1}, we derive the following Eq. (\ref{eq:midinequality}) that
\begin{sequation*}
\begin{split}
\Delta^P(t)&\leq-\frac{1+\varepsilon}{\varepsilon^2}\sum\limits_{J_i\in \eta^P(t)}\sum\limits_{\xi_l^i\in J_i \cap \xi_l^i\notin\eta_l^O(t)}\frac{r_l^i(u_l^{iP})}{r_{l}^{i}(M_{\mathcal{K}}/\varepsilon N(t))} \\
&\leq -\frac{(1+\varepsilon)N(t)}{\varepsilon}\sum\limits_{J_i\in \eta^P(t)}\sum\limits_{\xi_l^i\in J_i \cap \xi_l^i\notin\eta_l^O(t)}\frac{u_l^{iP}}{M_{\mathcal{K}}} \\
&\leq -\frac{(1+\varepsilon)N(t)}{\varepsilon}(\frac{\sum\limits_{J_i\in \eta^P(t)}\sum\limits_{\xi_l^i\in J_i}u_l^{iP}}{M_{\mathcal{K}}}-\!\!\!\!\!\sum\limits_{\xi_l^i\in\eta_l^O(t)}\frac{u_l^{iP}}{M_{\mathcal{K}}}) \\
&\leq -\frac{(1+\varepsilon)N(t)}{\varepsilon}(\frac{\sum\limits_{J_i\in \eta^P(t)}h_i(t)}{M_{\mathcal{K}}}-\sum\limits_{\xi_l^i\in\eta_l^O(t)}\frac{h_i(t)}{M_{\mathcal{K}}}) \\
&\leq -\frac{(1+\varepsilon)N(t)}{\varepsilon}(\frac{M_{\mathcal{K}}}{M_{\mathcal{K}}}-\frac{\sum\limits_{J_i\in\eta^O(t)}\frac{M_{\mathcal{K}}}{\varepsilon N(t)}}{M_{\mathcal{K}}}) \\
&=-\frac{(1+\varepsilon)N(t)}{\varepsilon}+\frac{1+\varepsilon}{\varepsilon^2}\sum\limits_{J_i\in\eta^O(t)}1 \\
&=-\frac{1+\varepsilon}{\varepsilon}\mathbb{E}[\frac{\mathrm{d}{P}(t) }{\mathrm{d} t}] +\frac{1+\varepsilon}{\varepsilon^2}\mathbb{E}[\frac{\mathrm{d}{OPT}(t) }{\mathrm{d} t}]
\end{split}
\end{sequation*}
At this point, we get the change bound caused by the PingAn insuring
\begin{sequation}\label{eq:pinganchange}
\Delta^P(t)\leq -\frac{1+\varepsilon}{\varepsilon}\mathbb{E}[\frac{\mathrm{d}{P}(t) }{\mathrm{d} t}] +\frac{1+\varepsilon}{\varepsilon^2}\mathbb{E}[\frac{\mathrm{d}{OPT}(t) }{\mathrm{d} t}]
\end{sequation}

\end{itemize}

We integrate the results derived above over time and complete the potential function analysis.
Due to the facts that $\int_{0}^{\infty }\mathbb{E}\left [ \frac{\mathrm{d}\Psi(t)}{\mathrm{d}t} \right ]\mathrm{d}t=\mathbb{E}[\Psi(\infty)]-\mathbb{E}[\Psi(0)]=0$ and $\int_{0}^{\infty }\mathbb{E}\left [ \frac{\mathrm{d}\Psi(t)}{\mathrm{d}t} \right ]\mathrm{d}t\leq \int_{0}^{\infty }(\Delta^O(t)+\Delta^P(t))\mathrm{d}t$, we have
\begin{sequation}\label{eq:totalchange}
-\int_{0}^{\infty }\Delta^P(t)\mathrm{d}t \leq \int_{0}^{\infty }\Delta^O(t)\mathrm{d}t
\end{sequation}
Substituting the Eq. (\ref{eq:optimalchange}) and the Eq. (\ref{eq:pinganchange}) into the Eq. (\ref{eq:totalchange}), it follows that
\begin{sequation}
\begin{split}
\!\!\!\!\!\!\int_{0}^{\infty }(\frac{1+\varepsilon}{\varepsilon}\mathbb{E}[\frac{\mathrm{d}{P}(t) }{\mathrm{d} t}] -\frac{1+\varepsilon}{\varepsilon^2}\mathbb{E}[\frac{\mathrm{d}{OPT}(t) }{\mathrm{d} t}])\mathrm{d}t \qquad \qquad \qquad \\
\quad \leq \int_{0}^{\infty }(\frac{C}{\alpha \varepsilon^2}\mathbb{E}[\frac{\mathrm{d}{OPT}(t)}{\mathrm{d}t}]+\frac{1}{\alpha \varepsilon}\mathbb{E}[\frac{\mathrm{d}P(t)}{\mathrm{d}t}])\mathrm{d}t
\end{split}
\end{sequation}
\begin{sequation}\label{eq:coefficient}
\begin{split}
\Rightarrow \quad \!\!\!\!\!\!\frac{\alpha (1+\varepsilon )\!\!-\!\!1}{\alpha \varepsilon}\!\!\int_{0}^{\infty }\!\!\!\!\mathbb{E}[\frac{\mathrm{d}{P}(t) }{\mathrm{d} t}] \mathrm{d}t \!\leq \!\frac{\alpha(1+\varepsilon)\!+\!C}{\alpha \varepsilon^2}\!\!\int_{0}^{\infty }\!\!\!\!\mathbb{E}[\frac{\mathrm{d}{OPT}(t)}{\mathrm{d}t}]\mathrm{d}t
\end{split}
\end{sequation}
\begin{sequation}
\Rightarrow \int_{0}^{\infty }\mathbb{E}[\frac{\mathrm{d}{P}(t) }{\mathrm{d} t}] \mathrm{d}t\leq \frac{\alpha(1+\varepsilon)+C}{\alpha\varepsilon^2+\alpha\varepsilon-\varepsilon}\int_{0}^{\infty }\mathbb{E}[\frac{\mathrm{d}{OPT}(t)}{\mathrm{d}t}]\mathrm{d}t
\end{sequation}
\begin{sequation}
\Rightarrow \mathbb{E}[P]\leq \frac{\alpha(1+\varepsilon)+C}{\alpha\varepsilon^2+(\alpha-1)\varepsilon}\mathbb{E}[OPT]
\end{sequation}
In the Eq. (\ref{eq:coefficient}), the coefficient of the left term $\frac{\alpha (1+\varepsilon )-1}{\alpha \varepsilon}>0$ as $\alpha > \frac{1}{1+\varepsilon}$ in PingAn, thus the coefficient can divide the right term.

The proof completes.
\end{proof}
\end{document}